\DeclareSymbolFont{rsfscript}{OMS}{rsfs}{m}{n}
\DeclareSymbolFontAlphabet{\mathrsfs}{rsfscript}
\DeclareMathOperator{\rt}{rt}
\DeclareMathOperator{\lspan}{span}
\newtheorem{theorem}{Theorem}
\newtheorem{corollary}[theorem]{Corollary}
\newtheorem{lemma}[theorem]{Lemma}
\newtheorem{proposition}[theorem]{Proposition}
\newtheorem{openproblem}{Open Problem}
\theoremstyle{remark}
\begin{document}
\title[Improving the upper bound]{Improving the upper bound on\\the length of the shortest reset words}
\author{Marek Szyku{\l}a}
\email{msz@cs.uni.wroc.pl}
\address{Institute of Computer Science,\\University of Wroc{\l}aw, Wroc{\l}aw, Poland}

\begin{abstract}
We improve the best known upper bound on the length of the shortest reset words of synchronizing automata.
The new bound is slightly better than $114 n^3 / 685 + O(n^2)$.
The \v{C}ern\'{y} conjecture states that $(n-1)^2$ is an upper bound.
So far, the best general upper bound was $(n^3-n)/6-1$ obtained by J.-E.~Pin and P.~Frankl in 1982.
Despite a number of efforts, it remained unchanged for about 35 years.

To obtain the new upper bound we utilize avoiding words.
A word is avoiding for a state $q$ if after reading the word the automaton cannot be in $q$.
We obtain upper bounds on the length of the shortest avoiding words, and using the approach of Trahtman from 2011 combined with the well-known Frankl theorem from 1982, we improve the general upper bound on the length of the shortest reset words.
For all the bounds, there exist polynomial algorithms finding a word of length not exceeding the bound. 

\bigskip
\noindent\textsc{Keywords}: avoiding word, \v{C}ern\'{y} conjecture, reset length, reset threshold, reset word, synchronizing automaton, synchronizing word
\end{abstract}
\maketitle
%%%%%%%%%%%%%%%%%%%%%%%%%%%%%%%%%%%%%%%%%%%%%%%%%%%%%%%%%%%%
\section{Introduction}

We deal with deterministic finite complete (semi)automata $\mathrsfs{A}(Q,\Sigma,\delta)$, where $Q$ is the set of \emph{states}, $\Sigma$ is the input \emph{alphabet}, and $\delta\colon Q \times \Sigma \to Q$ is the \emph{transition function}.
We extend $\delta$ to the function $Q \times \Sigma^* \to Q$ in the usual way.
Throughout the paper, by $n$ we denote the number of states $|Q|$.

By $\Sigma^{\le i}$ we denote the set of all words over $\Sigma$ of length at most $i$.
Given a state $q \in Q$ and a word $w \in \Sigma^*$ we write shortly $q\cdot w = \delta(q,w)$.
Given a subset $S \subseteq Q$ we write $S\cdot w$ for the image $\{q\cdot w \mid q \in S\}$.
Then, $S\cdot w^{-1}$ is the preimage $\{q \in Q \mid q\cdot w \in S\}$, and when $S$ is a singleton we also write $q\cdot w^{-1} = \{q\}\cdot w^{-1}$.

The \emph{rank} of a word $w \in \Sigma^*$ is the cardinality of the image of $Q$ under the action of this word: $|Q\cdot w|$.
A word is \emph{reset} or \emph{synchronizing} if it has rank $1$.
An automaton is \emph{synchronizing} if it admits a reset word.
The \emph{reset threshold} $\rt(\mathrsfs{A})$ is the length of the shortest reset words.

We say that a word $w \in \Sigma^*$ \emph{compresses} a subset $S \subseteq Q$ if $|S\cdot w| < |S|$.
A word $w \in \Sigma^*$ \emph{avoids} a state $q \in Q$ if $q \notin Q\cdot w$.
A state that admits an avoiding word is \emph{avoidable}.
We also say that a state $q$ is \emph{avoidable from a subset} $S$ if there exists a word $w$ such that $q \notin S\cdot w$.

The famous \v{C}ern\'{y} conjecture, formally formulated in 1969, is one of the most longstanding open problems in automata theory.
It states that every synchronizing $n$-state automaton has a reset word of length at most $(n-1)^2$.
This bound would be tight, since it is reached for every $n$ by the \v{C}ern\'{y} automata \cite{Cerny1964}.
Fig.~\ref{fig:cerny4} shows the \v{C}ern\'{y} automaton with $n=4$ states.
Its shortest reset word is $b a^3 b a^3 b$.

\begin{figure}[htb]
\centering\includegraphics{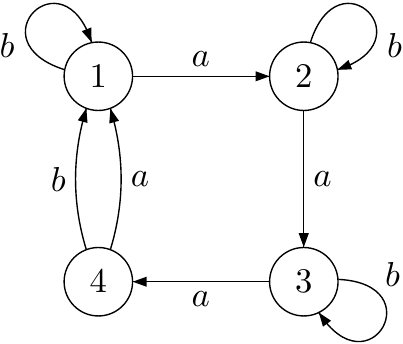}
\caption{The \v{C}ern\'{y} automaton with $4$ states.}\label{fig:cerny4}
\end{figure}

The first general upper bound for the reset threshold given by \v{C}ern\'{y} in \cite{Cerny1964} was $2^n-n-1$.
Later, it was improved several times:
$\frac{1}{2}n^3 - \frac{3}{2}n^2 + n + 1$ given by Starke \cite{Starke1966} in~1966,
$\frac{1}{3}n^3 - \frac{3}{2}n^2 + 25/6 n - 4$ by \v{C}ern\'{y}, Pirick\'{a}, and Rosenauerov\'{a} \cite{CernyPirickaRosenauerova1971} in~1971,
$\frac{7}{27}n^3 - 17/18 n^2 + 17/6n - 3$ by Pin \cite{Pin1978SurLesMots} in~1978, and
$(\frac{1}{2}-\frac{\pi}{36})n^3 + o(n^3)$ by Pin \cite{Pin1983OnTwoCombinatorialProblems} in~1981.

Then, the well known upper bound was established in 1982 by Pin and Frankl through the following combinatorial theorem:
\begin{theorem}[\textrm{\cite{Fr1982,Pin1983OnTwoCombinatorialProblems}}]\label{thm:pin_bound}
Let $\mathrsfs{A}(Q,\Sigma,\delta)$ be a strongly connected synchronizing automaton, and
consider a subset $S \subseteq Q$ of cardinality $\ge 2$.
Then there exists a word such that $|S\cdot w| < |S|$ of length at most
$$\frac{(n-|S|+2)\cdot(n-|S|+1)}{2}.$$
\end{theorem}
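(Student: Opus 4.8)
The plan is to reduce the compression problem to a statement about extremal set systems and then invoke the Frankl theorem on set systems with bounded-size pairwise unions. Fix a subset $S$ with $|S|\ge 2$. The crucial object is the collection of preimages: for each word $w$, consider the set $S\cdot w^{-1}$, or more precisely the family of full preimages of single states under the action of words. My first step would be to establish a correspondence between \emph{not} being able to compress $S$ with a short word and the existence of a large family of subsets of $Q$ that all meet $S$ in a specific restricted way. The key combinatorial insight is that if no short word compresses $S$, then certain preimage sets behave like an intersecting or union-bounded family, which is exactly the regime controlled by Frankl's theorem.

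\smallskip

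Concretely, I would argue as follows. Since $\mathrsfs{A}$ is synchronizing and strongly connected, $S$ can be compressed by \emph{some} word; let $w$ be a shortest word with $|S\cdot w|<|S|$, and suppose for contradiction that $|w|$ exceeds the claimed bound. Write $w = a_1 a_2 \cdots a_\ell$ with each $a_i\in\Sigma$, and for $0\le i\le \ell$ set $S_i = S\cdot(a_1\cdots a_i)$, so $S_0=S$ and $|S_\ell|<|S|$ while $|S_i|=|S|$ for all $i<\ell$ by minimality. Now run the word \emph{backwards}: consider the preimages $T_i = q\cdot(a_{i+1}\cdots a_\ell)^{-1}$ where $q$ is the pair of states that collapses at the last step. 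The idea is that minimality of $w$ forces these backward preimage sets $T_i$ to be pairwise ``spread out'' over $Q\setminus S$ — no two of them can coincide on $S$ in a way that would allow a shorter collapse — so they form a set system whose pairwise unions avoid $S$ and hence have cardinality bounded by $n-|S|$.

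\smallskip

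Once the set system is identified, the counting is where Frankl's theorem enters. Frankl's 1982 result bounds the size of a family of sets whose pairwise unions have cardinality at most some parameter $r$; applied with $r = n-|S|$, it yields that such a family can have at most $\binom{r+2}{2} = \frac{(n-|S|+2)(n-|S|+1)}{2}$ members. Since along a minimal compressing word each step must contribute a \emph{new} member to this family (otherwise one could splice out a loop and shorten $w$, contradicting minimality), the length $\ell$ is bounded by the size of the family, giving exactly $\ell \le \frac{(n-|S|+2)(n-|S|+1)}{2}$.

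\smallskip

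The main obstacle, I expect, is the second step: correctly setting up the set system and proving that distinct positions $i$ along the minimal word yield \emph{distinct} sets with the union-bounded property. This requires carefully choosing which preimages to track and verifying that a repetition among them would let us excise a factor of $w$ and still compress $S$, contradicting minimality. Getting the bound $n-|S|$ (rather than something weaker) on the pairwise unions — so that Frankl's theorem produces precisely the stated quadratic — is the delicate point, and it is presumably where strong connectivity is genuinely used to guarantee that the preimages stay confined to the complement of $S$.
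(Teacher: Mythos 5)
This theorem is not proved in the paper at all: it is imported as a black box from Frankl and Pin (\cite{Fr1982,Pin1983OnTwoCombinatorialProblems}), so your attempt has to be judged against the standard Pin--Frankl argument, and there it has a genuine gap. The extremal theorem you invoke does not exist in the form you state it. A single family of sets whose pairwise unions have cardinality at most $r$ can be exponentially large --- all $2^{r}$ subsets of an $r$-element universe satisfy that hypothesis --- so no bound of the form $\binom{r+2}{2}$ can follow from it. Frankl's 1982 result is a \emph{skew two-family} (Bollob\'as-type) inequality: if $A_1,\ldots,A_m$ have size $a$, $B_1,\ldots,B_m$ have size $b$, $A_i \cap B_i = \emptyset$ for all $i$, and $A_i \cap B_j \neq \emptyset$ for all $i<j$, then $m \le \binom{a+b}{a}$. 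The quadratic $\frac{(n-|S|+2)(n-|S|+1)}{2}$ is exactly $\binom{a+b}{a}$ with $a=2$ and $b=n-|S|$; the ``$2$'' comes from tracking a \emph{pair} of states, and no object of size $2$ ever appears in your set system, which is a sign that the reduction cannot produce the stated bound.

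The set system itself is also misidentified. The suffix preimages $T_i = q\cdot(a_{i+1}\cdots a_\ell)^{-1}$ are in no way confined to $Q\setminus S$, and minimality does not make their unions small. The correct objects are on the forward side: with $S_i = S\cdot a_1\cdots a_i$, minimality of $w$ gives $|S_i|=|S|$ for $i<\ell$, so each map $S_i \to S_{i+1}$ is a bijection; let $P_{\ell-1}\subseteq S_{\ell-1}$ be a pair merged by $a_\ell$ and pull it back through these bijections to pairs $P_i \subseteq S_i$ for $i=0,\ldots,\ell-1$. Now take $A_i = P_i$ (size $2$) and $B_i = Q\setminus S_i$ (size $n-|S|$). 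Clearly $A_i\cap B_i=\emptyset$, and your splicing idea, applied to these objects, yields precisely the skew cross-intersection condition rather than mere distinctness: if $P_j \subseteq S_i$ for some $i<j$, then $a_1\cdots a_i$ maps $S$ onto a set containing $P_j$ and $a_{j+1}\cdots a_\ell$ merges $P_j$, so the spliced word $a_1\cdots a_i\,a_{j+1}\cdots a_\ell$ compresses $S$ and is shorter than $w$, a contradiction; hence $P_j \cap (Q\setminus S_i)\neq\emptyset$ for all $i<j$. After reindexing, Frankl's inequality gives $\ell \le \binom{n-|S|+2}{2}$, as required. Note finally that distinctness of the family members (``each step contributes a new member''), which is all your loop-excision argument is set up to deliver, would never suffice --- the count would then only be bounded by the total number of admissible sets --- and that strong connectivity plays no role in confining anything to $Q\setminus S$: the existence of some compressing word already follows from synchronization alone.
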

For integers $1 \le i,j \le n$ we define
\begin{eqnarray*}
C(j,i) & = & \sum_{s=i+1}^{j} \frac{(n-s+2)\cdot(n-s+1)}{2}.
\end{eqnarray*}
From Theorem~\ref{thm:pin_bound}, $C(j,i)$ is an upper bound on the length of the shortest words compressing a subset of size $j$ to a subset of size at most $i$: starting from a subset $S$ of size $j$, we iteratively apply Theorem~\ref{thm:pin_bound} to bound the length of a shortest word compressing each (in the worst case) of the obtained subsets of sizes $j,j-1,\ldots,i+1$.
This yields the well known bound on the length of the shortest reset words:
$$\rt(\mathrsfs{A}) \le C(n,1) = \frac{n^3-n}{6}.$$
This bound was also discovered independently in~\cite{KRS1987}.
Actually, the best bound was $\frac{n^3-n}{6}-1$ (for $n \ge 4$), since Pin~\cite{Pin1983OnTwoCombinatorialProblems} proved that (for $n \ge 4$) there is a word compressing $Q$ to a subset of size $n-3$ by a word of length $9$ (instead of $10$).
Theorem~\ref{thm:pin_bound} also bounds the lengths of a compressing word found by a greedy algorithm (e.g.~\cite{AG2016GreedyAlgorithms,Ep1990}), which is an algorithm finding a reset word by iterative application of a shortest word compressing the current subset.
For about 35 years, there was no progress in improving the bound in the general case.

However, better bounds have been obtained for a lot of special classes of automata, for example for
oriented (monotonic) automata \cite{Ep1990}, circular automata \cite{Dubuc1998}, Eulerian automata \cite{Kari2003Eulerian}, aperiodic automata \cite{Tr2007Aperiodic}, generalized and weakly monotonic automata \cite{AV2005SynchronizingGeneralizedMonotonicAutomata,Volkov2009ChainOfPartialOrders}, automata with a sink (zero) state \cite{Ma2008ZeroState}, one-cluster automata \cite{BBP2011QuadraticUpperBoundInOneCluster,Steinberg2011OneClusterPrime}, quasi-Eulerian and quasi-one-cluster automata \cite{Berlinkov2013QuasiEulerianOneCluster}, automata respecting intervals of a directed graph \cite{GK2013AutomataRespectingIntervals}, decoders of finite prefix codes \cite{BS2016AlgebraicSynchronizationCriterion,BiskupPlandowski2009HuffmanCodes}, automata with a letter of small rank \cite{BS2016AlgebraicSynchronizationCriterion,Pin1972Utilisation}, and a subclass of 1-contracting automata \cite{Don2016CernyAnd1Contracting}.
See also \cite{Volkov2008Survey} for a survey.

In~2011, Trahtman claimed the better upper bound $(7n^3+6n-16)/48$ \cite{Tr2011ModifyingUpperBound}.
Unfortunately, the proof contains an error, and so the result remains unproved.
The idea was to utilize avoiding words; \cite[Lemma~3]{Tr2011ModifyingUpperBound} states that for every $q \in Q$ there exists an avoiding word of length at most $n-1$.
A counterexample to this was found in~\cite{GJT2014ANoteOnARecentAttempt}, where it was also suggested that providing any linear upper bound on the length of avoiding words would also imply an improvement for the upper bound on the reset threshold.

The avoiding word problem is similar to synchronization: instead of bringing the automaton into one state, we ask how long word we require to not being in a particular state.
For the automaton from Fig.~\ref{fig:cerny4}, the shortest avoiding words for states $1$, $2$, $3$, $4$ are $ba$, $baa$, $baaa$, and $b$, respectively.
So far, only a trivial cubic upper bound $\rt(\mathrsfs{A})+1$ was known for synchronizing automata.
Avoiding words do not necessarily exist in general, but they always do for every state in the case of a synchronizing automaton unless there is a \emph{sink state} (\cite{Ma2008ZeroState}), for which all letters act like identity.

The main contributions in this paper are as follows:
We prove upper bounds on the length of the shortest avoiding words, in particular the quadratic bound $(n-1)(n-2)+2$.
Also, the length of avoiding words is connected with the length of compressing words.
We show that for every state $q$ and a subset of states $S$, either there is a short avoiding word for $q$ from $S$ or a short compressing word for $S$.
This connection leads to the main idea for the improvement of the general upper bound on the reset threshold: either improve by avoiding words, or use shorter compressing words directly to reduce the bound obtained by Theorem~\ref{thm:pin_bound}.
In contrast to the previous approaches, which bounded the length of the compressing words independently for each size $|S|$, the new bound utilizes a conditional approach.

The new upper bound is
$$(85059 n^3 + 90024 n^2 + 196504 n - 10648)/511104,$$
which is slightly better than the much simpler formula $114 n^3 / 685 + O(n^2)$.
The latter improves the coefficient of $n^3$ by $1/4110$.
In the last section we discuss open problems and further possibilities for improvements.

%%%%%%%%%%%%%%%%%%%%%%%%%%%%%%%%%%%%%%%%%%%%%%%%%%%%%%%%%%%%
\section{Avoiding words}

For the next lemma, we need to introduce a few definitions from linear algebra for automata (see, e.g.,~\cite{BS2016AlgebraicSynchronizationCriterion,Kari2003Eulerian,Pin1972Utilisation}).
By $\mathbb{R}^n$ we denote the real $n$-dimensional linear space of row vectors.
Without loss of generality we assume that $Q = \{1,2,\ldots,n\}$.
For a vector $v \in \mathbb{R}^n$, we denote the value at an $i$-th position by $v(i)$.
For a subset $S \subseteq Q$, by $[S]$ we denote its characteristic row vector, which has $[S](i)=1$ if $i \in S$, and $[S](i)=0$ otherwise.
Similarly, for a matrix $M$, we denote the value at an $i$-th row and a $j$-th column by $M(i,j)$.
For a word $w \in \Sigma^*$, by $[w]$ we denote the $n \times n$ matrix of the transformation of $w$:
$[w](i,j) = 1$ if $i\cdot w=j$ (state $i$ is mapped to state $j$ by the transformation of $w$), and $[w](i,j)=0$ otherwise.

Right matrix multiplication corresponds to concatenation of two words; i.e.\ for every two words $u,v \in \Sigma^*$ we have $[uv] = [u]\cdot[v]$.
For a subset $S$ we have $([S][u])(i)$ equal to the number of states from $S$ mapped by the transformation of $u$ to state $i$.
In particular, $([S][u])(i) \ge 1$ if and only if $[S\cdot u](i) = 1$.
Note that for $w \in \Sigma^*$, the matrix $[w]$ contains exactly one $1$ in each row.
Therefore, these are \emph{stochastic} matrices, and we have the property that for any $v \in \mathbb{R}^n$, right matrix multiplication by $[w]$ preserves the sum of the entries, i.e.\ $\sum_{i \in Q} [v](i) = \sum_{i \in Q} ([v][w])(i)$.

For example, for the automaton from Fig.~\ref{fig:cerny4} we have:
$$[a]=\left(
  \begin{smallmatrix}
    0 & 1 & 0 & 0\\
    0 & 0 & 1 & 0\\
    0 & 0 & 0 & 1\\
    1 & 0 & 0 & 0
  \end{smallmatrix}
\right),\ [b]=\left(
  \begin{smallmatrix}
    1 & 0 & 0 & 0\\
    0 & 1 & 0 & 0\\
    0 & 0 & 1 & 0\\
    1 & 0 & 0 & 0
  \end{smallmatrix}
\right),\ [ba]=\left(
  \begin{smallmatrix}
    0 & 1 & 0 & 0\\
    0 & 0 & 1 & 0\\
    0 & 0 & 0 & 1\\
    0 & 1 & 0 & 0
  \end{smallmatrix}
\right).$$
If $[S] = [1,0,1,1]$, then $[S][ba]=[S][b][a] = [0,2,0,1]$.

The linear subspace \emph{spanned} by a set of vectors $V$ is denoted by $\lspan(V)$.
Given a linear subspace $L \subseteq \mathbb{R}^n$ and an $n \times n$ matrix $m$, the linear subspace mapped by $m$ is $L m = \{v m \mid v \in L\}$.
The \emph{dimension} of a linear subspace $L$ is denoted by $\dim(L)$.

The following key lemma states that by a short (linear) word we can either avoid a state (or one of the states from some set $A$) from the current subset or compress the current subset.
\begin{lemma}\label{lem:avoid_or_compress}
Let $\mathrsfs{A}(Q,\Sigma,\delta)$ be an $n$-state automaton.
Consider a non-empty subset $S \subseteq Q$ and a non-empty proper subset $A \subsetneq S$.
Suppose that there is a word $w \in \Sigma^*$ such that $A \nsubseteq S\cdot w$.
Then there exists a word $w$ length at most $n-|A|$ satisfying either 
\begin{enumerate}
\item $A \nsubseteq S\cdot w$, or
\item $|S\cdot w| < |S|$.
\end{enumerate}
\end{lemma}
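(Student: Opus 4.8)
The plan is to use the linear-algebraic framework just introduced, tracking the characteristic vectors $[S\cdot u]$ as words $u$ are prepended or extended, and to argue that if we never manage to avoid $A$ (i.e. condition (1) fails for all short words), then we are forced into a compression (condition (2)). The key object to watch is the subspace spanned by the characteristic vectors of the reachable images, together with how the set $A$ sits inside these images.

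First I would set up the contrapositive-flavored dichotomy: suppose no word of length at most $n-|A|$ achieves $A \nsubseteq S\cdot w$; I then want to produce a word of length at most $n-|A|$ that compresses $S$. The natural device is to build a chain of subspaces. Starting from $[S]$, I would consider the sequence of vectors obtained by reading letters of the hypothesized word $w$ (the one guaranteed to exist with $A \nsubseteq S\cdot w$) and examine the vectors $[S], [S][w_1], [S][w_1 w_2], \ldots$ where $w = w_1 w_2 \cdots$. Because right multiplication by $[u]$ preserves the coordinate sum, all these vectors live in a fixed affine region, and their characteristic-vector counterparts $[S\cdot u]$ are $0/1$ vectors. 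The idea is that as long as every prefix keeps $A \subseteq S\cdot u$, the span of the relevant difference vectors (something like $[S\cdot u] - [A]$, or the projection onto coordinates in $A$ versus outside $A$) grows by at most one dimension per letter, so within $n-|A|$ steps either the dimension saturates — forcing a coincidence that yields avoidance — or we detect that two states of $S$ have merged, i.e. a compression.

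More concretely, I would track the quantity $\dim \lspan\{[S\cdot u] : u \text{ a prefix of the chosen word}\}$ restricted to the coordinates outside $A$, or alternatively track how the preimage structure forces collisions. The cleanest route is probably: consider the shortest word $w$ (a prefix of the guaranteed avoiding word) after which $A \nsubseteq S\cdot w$; I want to bound its length. Look at the strictly increasing chain of subspaces $L_0 \subsetneq L_1 \subsetneq \cdots$ where $L_k = \lspan\{[S\cdot u] : |u| \le k\}$. If this chain strictly increases at each step until avoidance, then since each $[S\cdot u]$ lives in $\mathbb{R}^n$ and all such vectors are $0/1$ vectors supporting $A$ (when avoidance has not yet occurred), the relevant dimension is bounded by $n - |A| + 1$ or so, capping the length. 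The places where the chain fails to strictly increase are exactly where we can splice in a shortcut or detect that the image has lost a state, giving compression.

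The main obstacle I anticipate is making the dimension-counting argument tight enough to hit $n-|A|$ rather than a weaker bound, and correctly handling the bookkeeping between ``avoiding $A$'' (a statement about $A \subseteq S\cdot u$ as sets) and the linear-algebraic span condition. Specifically, the subtle point is that $A \subseteq S\cdot u$ does not immediately translate into a clean linear constraint on $[S\cdot u]$, so I would need an intermediate lemma relating membership $A \subseteq S\cdot u$ to the characteristic vector $[S\cdot u]$ having $1$'s on all of $A$, and then argue that the first time the span of these vectors stops growing, I can extract either a linear dependence witnessing that some state of $A$ has been dropped (avoidance) or that $|S\cdot u|$ has dropped (compression). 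Getting the off-by-one in the dimension bound to land exactly on $n - |A|$, and ensuring the extracted word is genuinely a prefix or splice of length within the bound rather than merely existing, is where the careful work will be.
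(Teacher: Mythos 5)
Your overall instinct (span the reachable vectors, note that the dimension grows by at least one per length step, and cap the dimension using a constraint coming from $A$) matches the paper's strategy, but two of your specific choices create genuine gaps. First, you track the \emph{characteristic} vectors $[S\cdot u]$, which are $0/1$ vectors. The paper instead works with the count vectors $[S][u]$, whose entry at $q$ is the number of states of $S$ mapped to $q$ by $u$. This difference is not cosmetic: the passage $u \mapsto ua$ is linear on count vectors ($[S][ua]=[S][u][a]$), which is exactly what makes the chain $L_0 \subseteq L_1 \subseteq \cdots$ satisfy $L_{i+1} = \lspan\left(L_i \cup \bigcup_{a\in\Sigma} L_i[a]\right)$ and hence stabilize permanently at the first stall --- the ascending chain property you invoke. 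Passing to supports destroys linearity, so with $[S\cdot u]$ a stalled chain can start growing again, and your ``capping the length'' conclusion does not follow. Moreover, characteristic vectors cannot detect compression at all: $|S\cdot u| < |S|$ is witnessed in the count vector by an entry $\ge 2$ at some position, information the $0/1$ vector discards. (Your opening idea of looking only at prefixes of the one guaranteed avoiding word has the same defect: the stabilization property needs all words of length $\le i$, not a single trajectory.)

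Second, you leave open exactly the step that is the heart of the proof, namely how a stall yields condition (1) or (2); you hope to ``extract either a linear dependence witnessing that some state of $A$ has been dropped, or that $|S\cdot u|$ has dropped'', but a linear dependence among image vectors witnesses neither. The paper's argument runs the other way around: assume no word of length $\le n-|A|$ satisfies (1) or (2); then every spanning vector $[S][w]$ with $|w| \le n-|A|$ has entry exactly $1$ at every position of $A$ (not $0$, not $\ge 2$), so every $v \in L_{n-|A|}$, being a combination with coefficient sum $c$, satisfies $v(q) = c = \frac{1}{|S|-|A|}\sum_{p \in Q \setminus A} v(p)$ for every $q \in A$. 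These constraints force $\dim L_{n-|A|} \le n-|A|$, so by the chain argument $L_{n-|A|}$ is already the stable maximal subspace; but the hypothesized avoiding word (of arbitrary length) produces a count vector violating the constraint, and by stability that vector lies in $L_{n-|A|}$ --- a contradiction. Note that this constraint exploits $|S| > |A|$, i.e.\ the properness of $A \subsetneq S$, and that it is a clean \emph{linear} constraint only on count vectors; this is precisely the ``intermediate lemma'' you anticipated needing but could not formulate with $0/1$ vectors.
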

\begin{proof}
Let $L_i = \lspan(\{[S][w] \mid w \in \Sigma^{\le i}\})$.
We consider the following sequence of linear subspaces:
$$L_0 \subseteq L_1 \subseteq L_2 \subseteq \ldots,$$
and use the ascending chain condition (see, e.g.,~\cite{BS2016AlgebraicSynchronizationCriterion,Kari2003Eulerian,Pin1972Utilisation,Steinberg2011AveragingTrick}):
\begin{itemize}
\item If $L_k = L_{k+1}$, then we claim that also $L_{k+1} = L_{k+2} = \ldots$ holds.
Observe that for all $i \ge 0$ we have:
$$L_{i+1} = \lspan\left(L_i \cup \bigcup_{a \in \Sigma} L_i [a]\right).$$
Hence, if $L_k = L_{k+1}$, then for $i=k$ we obtain
$$L_{k+1} = \lspan\left(L_{k+1} \cup \bigcup_{a \in \Sigma} L_{k+1} [a]\right) = L_{k+2},$$
and so $L_{k+i} = L_k$ for all $i \ge 0$.
\item Let $i$ be the smallest integer such that $L_i = L_{i+1}$.
Then $m=\dim(L_i)$ is the maximum among the dimensions of the subspaces from the above sequence.
\item $\dim(L_0) = 1$ and the dimensions grow by at least $1$ up to $m$.
Hence, we have
$$\dim(L_{n-|A|}) \ge \min\{m,n-|A|+1\}.$$
\end{itemize}

Note that if for a word $w$ the vector $v=[S][w]$ has $v(q)=0$ for some $q \in A$, then $q \notin S\cdot w$, and we have Case~(1).
If $v=[S][w]$ has $v(q) \ge 2$ for some $q \in A$, then a pair of states from $S$ is compressed by the action of $w$ (to state $q$), and we have Case~(2).

Now, we show that in the spanning set of $L_{n-|A|}$ there must be a vector that contains either $0$ or an integer $\ge 2$ at the position corresponding to a state from $A$, which implies that there exists a word $w$ of length at most $n-|A|$ satisfying either Case~(1) or Case~(2).
Suppose for a contradiction that this is not the case.
Every vector $v \in L_k$ is a linear combination of the vectors from the spanning set; let $c$ be the sum of the coefficients of the spanning vectors in such a linear combination.
Every vector $[S][w]$ in the spanning set has the sum of elements equal to $|S|$ and has $1$ at all the positions corresponding to the states from $A$.
Hence, the sum of the entries in $v$ is equal to $c|S|$, and at every position corresponding to the states from $A$ we have value $c$.
The sum of the entries at the positions corresponding to the states from $Q \setminus A$ equals $c(|S|-|A|)$.
Therefore, every $q \in A$ satisfies the following equality:
$$v(q) = \frac{1}{|S|-|A|}\cdot\sum_{p \in Q \setminus A} v(p).$$
It follows that the values at the positions corresponding to the states from $A$ are completely determined by the sum of the values from the other positions, which means that the dimension of $L_{n-|A|}$ is at most $n-|A|$.
We assumed in the lemma that there exists a word $w$ avoiding a state from $A$.
Hence, $[S][w]$ has $0$ at some position corresponding to a state from $A$, and therefore breaks the above equality for this state, as the right side is non-zero.
Therefore, the subspace $L_{|w|}$ must have a larger dimension that the dimension of $\dim(L_{n-|A|})$.
This means that the dimension of $L_{n-|A|}$ is not maximal, which contradicts $\dim(L_{n-|A|}) \ge \min\{m,n-|A|+1\}$.
\end{proof}

Lemma~\ref{lem:avoid_or_compress} can be applied iteratively to obtain a word compressing the given subset to the desired size.
\begin{lemma}\label{lem:avoiding_iterations}
Let $\mathrsfs{A}(Q,\Sigma,\delta)$ be an $n$-state automaton.
Consider a non-empty subset $S \subseteq Q$ and a non-empty proper subset $A \subsetneq S$.
Let $k \ge 1$ be an integer.
Suppose that there exists a word $w \in \Sigma^*$ such that $A \nsubseteq S\cdot w$.
Then there is a word $w$ of length at most $k(n-|A|)$ satisfying either:
\begin{enumerate}
\item $A \nsubseteq S\cdot w$, or
\item $|S\cdot w| \le |S|-k$.
\end{enumerate}
\end{lemma}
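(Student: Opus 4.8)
The plan is to prove the statement by induction on $k$, using Lemma~\ref{lem:avoid_or_compress} both as the base case and as the engine of the inductive step. For $k=1$ the claim is exactly Lemma~\ref{lem:avoid_or_compress}: a word of length at most $n-|A|$ satisfying (1) or $|S\cdot w|<|S|$ is the same as one satisfying (1) or $|S\cdot w|\le|S|-1$, since the cardinalities are integers. So the base case needs no new argument.

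For the inductive step I would assume the statement for $k-1$ and apply Lemma~\ref{lem:avoid_or_compress} to the pair $(S,A)$, obtaining a word $v$ with $|v|\le n-|A|$ that either avoids, so that $w=v$ already witnesses conclusion~(1), or compresses, giving $S'=S\cdot v$ with $|S'|\le|S|-1$. In the compressing case, if $A\nsubseteq S'$ then $w=v$ again witnesses conclusion~(1), because $S'=S\cdot v$. Otherwise $A\subseteq S'$, and the idea is to invoke the inductive hypothesis on $(S',A)$ with parameter $k-1$ to get a word $v'$ with $|v'|\le(k-1)(n-|A|)$, and to output $w=vv'$. Then $|w|\le k(n-|A|)$, and since $S\cdot w=S'\cdot v'$ we inherit either $A\nsubseteq S\cdot w$ (conclusion~(1)) or $|S\cdot w|\le|S'|-(k-1)\le|S|-k$ (conclusion~(2)), closing the step.

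The step I expect to be the main obstacle is justifying that the inductive hypothesis may legitimately be applied to $(S',A)$, i.e.\ re-establishing \emph{both} hypotheses of the lemma for the compressed image: that $A$ is still a proper subset of $S'$, and, more delicately, that $A$ is still avoidable \emph{from} $S'$. Properness can be handled by bookkeeping on the sizes, treating the degenerate case $A=S'$ separately (there $|S\cdot v|=|A|$, which already gives conclusion~(2) as soon as $|A|\le|S|-k$). Avoidability, however, is genuinely problematic, because it is \emph{not} preserved by the action: one can construct $S$, $A$ and a compressing word $v$ for which $A$ is avoidable from $S$ yet $A\subseteq S\cdot v$ is inescapable from $S\cdot v$. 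Consequently a naive black-box iteration of Lemma~\ref{lem:avoid_or_compress} may compress $S$ into a trap from which neither (1) nor any further compression is possible, even though the original instance does admit a short word of type~(1).

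To repair this I would not iterate blindly but first test for a short avoiding word: if some $w$ with $|w|\le k(n-|A|)$ already satisfies $A\nsubseteq S\cdot w$, output it and stop, which disposes of precisely the configurations that lead into a trap. Only when no such short avoiding word exists would one compress via Lemma~\ref{lem:avoid_or_compress}, and in that regime the compression is forced (the short word cannot avoid), so the plan is to argue that in this forced regime the compression cannot destroy avoidability from $S'$, letting the inductive hypothesis apply to $(S',A)$. Making this preservation claim precise — showing that when all short words fail to avoid, the avoiding word guaranteed by the hypothesis remains reachable after prepending $v$ — is where I expect the real work to lie.
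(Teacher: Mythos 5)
Your skeleton (iterate Lemma~\ref{lem:avoid_or_compress}, after first disposing of the case where some word of length at most $k(n-|A|)$ already avoids) is exactly the paper's, and your diagnosis of the obstacle is correct and important; but the repair you propose cannot be carried out, because the preservation claim on which your inductive step rests is false even in the ``forced regime''. Concretely: take a synchronizing automaton with sink state $1$ on states $\{1,3,4,\ldots,n\}$ whose shortest reset word has quadratic length (such series exist, see \cite{Ma2008ZeroState} and the tight bound $n(n-1)/2$ mentioned with \cite{Rystsov1997ResetWordsForCummutativeAndSolvableAutomata}), let all of its letters additionally fix a new state $2$, and add one further letter $b$ that swaps $1$ and $2$ and maps every other state to $1$. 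Put $S=Q\setminus\{2\}$ and $A=\{1\}$. Any word avoiding $1$ from $S$ must, before its first occurrence of $b$, map all of $S$ into the sink $1$: otherwise after that $b$ the image contains both $1$ and $2$, and then every further image contains $\{1,2\}$, since every letter permutes $\{1,2\}$. Hence the shortest avoiding word has quadratic length, so for any fixed $k$ and large $n$ no short avoiding word exists --- your forced regime holds. Nevertheless the single letter $b$ is a legitimate output of Lemma~\ref{lem:avoid_or_compress} (it satisfies Case~(2): it compresses $S$ to $\{1,2\}$), and from $S'=\{1,2\}$ the set $A$ can never be avoided. So a forced-regime compression \emph{can} destroy avoidability, the hypotheses of your inductive hypothesis fail at $(S',A)$, and the induction does not close. (The statement of the lemma survives in this example only because $b$ happens to compress by far more than $k$, so conclusion~(2) is witnessed outright; your argument never checks for this, and showing in general that a ``dead'' image always comes with compression at least $k$ is precisely the real missing work.) A secondary gap: you handle the degenerate case $A=S'$ only when $|A|\le|S|-k$; the correct device is that if $A=S\cdot v$ then some letter $a$ gives $A\nsubseteq A\cdot a$ (otherwise $A\cdot u=A$ for all $u$, contradicting avoidability of $A$ from $S\supseteq A$), producing a short word of type~(1) and hence a contradiction with the forced regime.

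Comparing with the paper: its proof of Lemma~\ref{lem:avoiding_iterations} is your plan verbatim --- assume no word of length at most $k(n-|A|)$ satisfies~(1), apply Lemma~\ref{lem:avoid_or_compress} $k$ times to the successive images, observe each application is forced into Case~(2), and keep $A$ a proper subset via the letter device above. What the paper does \emph{not} do is verify the remaining hypothesis of Lemma~\ref{lem:avoid_or_compress} for the iterated sets, namely that $A$ is still avoidable from $S\cdot w_1\cdots w_{i-1}$; it simply asserts that ``the conditions are met for every iteration''. So the obstacle you put your finger on is genuine, it is exactly the point where the published argument is thinnest, and neither your proposal nor the paper's text supplies the missing step; in any case you should not look for it in the direction of preservation of avoidability, since the construction above refutes that claim.
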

\begin{proof}
If Case~(1) holds for some $w \in \Sigma^{\le k(n-|A|)}$ then we are done; suppose this is not the case.

We iteratively apply Lemma~\ref{lem:avoid_or_compress} $k$ times for subset $A$ starting from subset $S$:
For $i=1,\ldots,k$ we apply the lemma for the subset $S\cdot w_1 \dots w_{i-1}$, where $w_j \in \Sigma^{\le n-|A|}$ is the word obtained from the lemma in the $j$-th iteration.

In every iteration, we must get Case~(2) of Lemma~\ref{lem:avoid_or_compress} ($|S\cdot w| < |S|$), as otherwise $A \nsubseteq S\cdot w_1 \dots w_i$, which contradicts our assumption that Case~(1) does not hold for every word of length at most $k(n-|A|) \ge i(n-|A|)$.
Also, for $i \le k-1$, we must have $A \subset S\cdot w_1 \dots w_i$ (i.e.\ $A$ is a proper subset); otherwise $A \nsubseteq S\cdot w_1 \dots w_i a$ for some letter $a \in \Sigma$ as $A$ contains a state that can be avoided from $S$, and this word has length at most $k(n-|A|)$ which again contradicts our assumption.
Therefore, the conditions are met for every iteration so we can apply the lemma $k$ times.

It follows that the obtained word $w_1 \dots w_k$ is such that $|S\cdot w_1 \dots w_k| \le |S|-k$.
\end{proof}

If the subset $A$ of states to avoid is large, the following approach can lead to a better bound:
\begin{lemma}\label{lem:avoiding_iterations2}
Let $\mathrsfs{A}(Q,\Sigma,\delta)$ be an $n$-state automaton.
Consider a non-empty subset $S \subseteq Q$ and a non-empty subset $A \subseteq S$.
If there exists a word $w \in \Sigma^*$ such that $A \nsubseteq S\cdot w$,
then there exists such a word of length at most $(|S|-|A|)(n-|A|)+1$.
\end{lemma}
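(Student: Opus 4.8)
The plan is to reduce to Lemma~\ref{lem:avoiding_iterations} with the largest useful number of compression steps, and then to spend a single extra letter to finish. Write $k = |S|-|A|$. If $k = 0$, i.e.\ $A = S$, I set $w_0 = \varepsilon$ and skip directly to the final step below; note that the claimed bound is then just $1$. If $k \ge 1$, so $A \subsetneq S$, I apply Lemma~\ref{lem:avoiding_iterations} with this $k$ to the pair $A \subsetneq S$ (the hypothesis of that lemma is exactly the hypothesis here). This yields a word $w_0$ of length at most $k(n-|A|) = (|S|-|A|)(n-|A|)$ satisfying either $A \nsubseteq S\cdot w_0$ or $|S\cdot w_0| \le |S|-k = |A|$.

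In the first alternative I am already done, since $w_0$ avoids a state of $A$ from $S$ and its length does not exceed the target bound. So assume $A \subseteq S\cdot w_0$; combined with $|S\cdot w_0| \le |A|$ this forces the exact equality $S\cdot w_0 = A$. (In the case $A = S$ the same equality $S\cdot w_0 = A$ holds trivially for $w_0 = \varepsilon$.) The situation is now reduced to avoiding a state of $A$ starting from the set $A$ itself.

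The final step is the key observation. First I note that a word avoiding a state of $A$ from $A$ exists: by hypothesis there is a word $u$ and a state $q \in A$ with $q \notin S\cdot u$, and since $A \subseteq S$ we have $A\cdot u \subseteq S\cdot u$, hence $q \notin A\cdot u$ as well, i.e.\ $A \nsubseteq A\cdot u$. I then claim that a single letter already suffices. Indeed, suppose for contradiction that $A \subseteq A\cdot a$ for every letter $a \in \Sigma$; since $|A\cdot a| \le |A|$, each such inclusion is an equality $A\cdot a = A$, and an immediate induction on word length gives $A\cdot v = A$, in particular $A \subseteq A\cdot v$, for every word $v$ --- contradicting the existence of $u$ above. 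Hence there is a letter $a$ with $A \nsubseteq A\cdot a = (S\cdot w_0)\cdot a$, so the word $w_0 a$ avoids a state of $A$ from $S$ and has length at most $(|S|-|A|)(n-|A|)+1$, as required.

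The main obstacle is the transition at the point where the compression bottoms out: Lemma~\ref{lem:avoiding_iterations} only guarantees $|S\cdot w_0|\le|A|$, not avoidance, so one must argue that the image has shrunk to exactly $A$ and then exploit the fact that from a set of this minimal size the only way to maintain the inclusion $A \subseteq A\cdot v$ under every letter is for every letter to fix $A$ setwise, which a single avoiding letter rules out.
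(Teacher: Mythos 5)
Your proof is correct and takes essentially the same route as the paper's: the paper iterates Lemma~\ref{lem:avoid_or_compress} directly at most $|S|-|A|$ times (which is precisely the content of Lemma~\ref{lem:avoiding_iterations} that you invoke as a black box with $k=|S|-|A|$), and finishes exactly as you do, by noting that once the image has shrunk to $S\cdot w_0 = A$, some letter $a$ must satisfy $A \nsubseteq A\cdot a$, since otherwise every letter (hence every word) would fix $A$ setwise, contradicting avoidability. If anything, your write-up is a slightly cleaner modularization, making the terminal equality $S\cdot w_0=A$ and the single-letter argument more explicit than the paper's somewhat terse closing sentences.
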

\begin{proof}
As in the proof of Lemma~\ref{lem:avoiding_iterations}, we iteratively apply Lemma~\ref{lem:avoid_or_compress} at most $|S|-|A|$ times for subset $A$ starting from subset $S$, stopping if the conditions are not met.
It is possible that we do not do any iteration, which is the case when $A=S$.

In every iteration, we obtain a word $w_i$ of length at most $n-|A|$.
If we get $A \nsubseteq S\cdot w_1 \dots w_i$ in some $i$-th iteration, then we are done as the word $w_1 \dots w_i$ has length at most $(|S|-|A|)(n-|A|)$.

If we get $A = S\cdot w_1 \dots w_i$ for some $i \in \{0,\ldots,|S|-|A|\}$, then observe that there must exist a letter $a \in \Sigma$ such that $A\cdot a \neq A$, because $A$ contains an avoidable state from $S \supseteq A$.
Note that since $|S\cdot w_1 \dots w_i|<|S\cdot w_1 \dots w_{i-1}|$ for every $i=1,\ldots,k$, after the $(|S|-|A|)$-th iteration we must have $|S\cdot w_1 \dots w_k| \le |S|-(|S|-|A|)=|A|$, we must get this case after the last iteration.
It follows that in any case we obtain the word $w_1 \dots w_i a$ of length at most $(|S|-|A|)(n-|A|)+1$.
\end{proof}

We state a quadratic upper bound on the length of the shortest avoiding words:
\begin{corollary}\label{cor:avoiding_bound}
For $n \ge 2$, in an $n$-state automaton $\mathrsfs{A}(Q,\Sigma,\delta)$, for every non-empty proper subset $A \subset Q$ containing an avoidable state, there exists a word avoiding a state from $A$ of length at most
$$(n-1-|A|)(n-|A|)+2.$$
\end{corollary}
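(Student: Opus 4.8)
The plan is to reduce the ambient set $Q$ by a single letter and then invoke Lemma~\ref{lem:avoiding_iterations2} on the resulting smaller image, so that the total cost exceeds $(n-1-|A|)(n-|A|)+1$ by only one symbol. The starting observation is that the automaton must have a letter that is not a permutation of $Q$: if every letter acted as a bijection on $Q$, then so would every word, and no state would be avoidable, contradicting the hypothesis that $A$ contains an avoidable state. I would fix such a letter $a_0$, so that $|Q\cdot a_0| \le n-1$.

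Next I would split into two cases according to whether $a_0$ already avoids a state of $A$. If $A \nsubseteq Q\cdot a_0$, then $a_0$ is itself an avoiding word of length $1 \le (n-1-|A|)(n-|A|)+2$, and we are done. Otherwise set $S = Q\cdot a_0$; then $A \subseteq S$ and $|S| \le n-1$. I would check that $A$ remains avoidable from $S$: by hypothesis there is a word $u$ with $A \nsubseteq Q\cdot u$, and since $S \subseteq Q$ we have $S\cdot u \subseteq Q\cdot u$, whence $A \nsubseteq S\cdot u$ as well. The hypotheses of Lemma~\ref{lem:avoiding_iterations2} are therefore satisfied for the pair $(S,A)$, so it yields a word $w'$ with $A \nsubseteq S\cdot w'$ of length at most $(|S|-|A|)(n-|A|)+1$. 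Because $A\subseteq S$ gives $|S|-|A|\ge 0$ and $A\subsetneq Q$ gives $n-|A|\ge 1$, the bound is monotone in $|S|$, so using $|S|\le n-1$ this is at most $(n-1-|A|)(n-|A|)+1$.

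Finally I would concatenate the two pieces: the word $a_0 w'$ satisfies $A \nsubseteq Q\cdot a_0 w' = S\cdot w'$ and has length at most $1 + (n-1-|A|)(n-|A|)+1 = (n-1-|A|)(n-|A|)+2$, as required. The only point that needs genuine care is the first step—guaranteeing the existence of a non-permutation letter and confirming that avoidability of $A$ is inherited by the subset $S \subseteq Q$ via the inclusion $S\cdot u \subseteq Q\cdot u$. Everything after that is bookkeeping together with the monotonicity of the bound from Lemma~\ref{lem:avoiding_iterations2} in the size of the set being compressed.
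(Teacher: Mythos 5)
Your proof is correct and follows essentially the same route as the paper: pick a non-permutation letter $a_0$ (which must exist because $A$ contains an avoidable state), handle the trivial case $A \nsubseteq Q\cdot a_0$, and otherwise apply Lemma~\ref{lem:avoiding_iterations2} to $S = Q\cdot a_0$ and prepend $a_0$. The only difference is that you spell out two details the paper leaves implicit — that all-permutation letters would preclude avoidability, and that avoidability from $Q$ transfers to avoidability from $S$ via $S\cdot u \subseteq Q\cdot u$ — which is a welcome bit of extra rigor but not a different argument.
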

\begin{proof}
Since there exists an avoidable state in $A$, there is a letter $a \in \Sigma$ such that $|Q\cdot a| < n$.

If $A \nsubseteq Q\cdot a$ then we are done with a word of length $1$.
Otherwise $A \subseteq Q\cdot a$, so we use Lemma~\ref{lem:avoiding_iterations2} with subset $A$ and subset $S=Q\cdot a$.
Since there exists a word avoiding a state from $A$, the lemma yields a word $w$ of length at most $(|S|-|A|)(n-|A|)+1 \le (n-1-|A|)(n-|A|)+1$.
Thus, $aw$ avoids a state from $A$ and has length at most $(n-1-|A|)(n-|A|)+2$.
\end{proof}
In particular, we obtain the upper bound $(n-2)(n-1)+2$ on the length of the shortest avoiding words for any state ($|A|=1$).

\begin{theorem}\label{thm:polynomial_time}
The words from Lemma~\ref{lem:avoid_or_compress}, Lemma~\ref{lem:avoiding_iterations}, Lemma~\ref{lem:avoiding_iterations2}, and Corollary~\ref{cor:avoiding_bound} can be found in polynomial time.
\end{theorem}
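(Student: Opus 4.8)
The plan is to turn each existence proof into a constructive algorithm, the only non-trivial case being Lemma~\ref{lem:avoid_or_compress}; the remaining three statements then follow by composing polynomially many calls to that procedure with elementary subset operations. The naive reading of the proof of Lemma~\ref{lem:avoid_or_compress}---inspecting all words of length at most $n-|A|$---is exponential, so the point is to compute the chain $L_0 \subseteq L_1 \subseteq \dots$ incrementally while retaining, for \emph{each} basis vector, an explicit witnessing word.

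First I would maintain a set $B$ of pairs $(w,[S][w])$ whose second coordinates form a basis (kept in reduced echelon form for fast independence tests), initialized with $(\varepsilon,[S])$. Then I would run a breadth-first search ordered by word length: upon dequeuing $w$ I compute $v=[S][w]$ (one $O(n)$ matrix--vector product, since each $[a]$ has a single $1$ per row), test whether $v$ has a $0$ or an entry $\ge 2$ at some position of $A$, and if so output $w$ and halt; otherwise, if $v$ is independent of $\lspan(B)$, I add $(w,v)$ to $B$ and enqueue $wa$ for every $a\in\Sigma$. Because every basis vector is stored as a genuine $[S][w_b]$, each enqueued extension $[S][w_b a]=[S][w_b][a]$ is again of this form, so the candidates examined are always legitimate witnesses.

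The correctness and the length bound then mirror the dimension argument of the lemma itself. I would first note that pruning the search---never enqueuing extensions of dependent vectors---loses no dimension: by induction on $i$, after all dequeued words of length at most $i$ are processed, $\lspan(B)=L_i$, since the extensions of a basis of $L_i$ already span $\bigcup_{a\in\Sigma}L_i[a]$. Hence $\dim(\lspan(B))$ equals $\dim(L_i)$, which is at least $\min\{i+1,m\}$. Now suppose for contradiction that no violating word is dequeued through level $n-|A|$; then every basis vector is a genuine $[S][w]$ with exactly a $1$ at each position of $A$, so (its entries summing to $|S|$) it satisfies $v(q)=\tfrac{1}{|S|-|A|}\sum_{p\in Q\setminus A}v(p)$ for all $q\in A$ and therefore lies in the $(n-|A|)$-dimensional constrained subspace. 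This forces $\dim(L_{n-|A|})\le n-|A|$, contradicting $\dim(L_{n-|A|})\ge n-|A|+1$, which holds because the assumed avoiding word forces $m\ge n-|A|+1$, whence $\dim(L_{n-|A|})\ge\min\{n-|A|+1,m\}=n-|A|+1$. Thus a violator of length at most $n-|A|$ is dequeued and returned. The search runs for at most $n-|A|\le n$ levels, each processing $O(|\Sigma|\,n)$ words at cost $O(n^2)$ per independence test, giving an $O(|\Sigma|\,n^4)$ bound.

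For the remaining statements I would execute the inductive constructions already written out. For Lemma~\ref{lem:avoiding_iterations} and Lemma~\ref{lem:avoiding_iterations2} I would call the above procedure at most $k\le n$, respectively $|S|-|A|\le n$, times, after each call replacing $S$ by the image $S\cdot w_i$ and, between iterations, testing the cheap conditions $A\subseteq S\cdot w_1\cdots w_i$ and scanning for a single letter $a$ with $A\nsubseteq(S\cdot w_1\cdots w_i)\cdot a$; appending such a letter when required is free. For Corollary~\ref{cor:avoiding_bound} I would scan the alphabet for a letter $a$ with $|Q\cdot a|<n$, test whether $A\nsubseteq Q\cdot a$, and otherwise invoke the Lemma~\ref{lem:avoiding_iterations2} procedure on $S=Q\cdot a$, prepending $a$ to its output. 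Since every ingredient is a polynomial-time subset operation or a call to the Lemma~\ref{lem:avoid_or_compress} algorithm, and only polynomially many calls are made, all four words are found in polynomial time. The one genuinely delicate point is the claim that the pruned, word-tracking search still attains the full dimension $\dim(L_i)$ at each level, so that the lemma's counting forces a short violator to surface; everything else is bookkeeping.
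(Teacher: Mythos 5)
Your proposal is correct and takes essentially the same approach as the paper: the paper also replaces $\Sigma^{\le i}$ by a small set $W_i$ of explicit witness words built incrementally (the reduction procedure of \cite{BS2016AlgebraicSynchronizationCriterion}), extending stored words by letters and keeping those whose vectors increase the dimension, then locates the violating word among these polynomially many candidates and composes such calls for the remaining statements. The only cosmetic difference is that the paper's procedure adds a single dimension-increasing word per level (so $|W_i|\le i+1$ and the spanned subspace need only have dimension $\min\{i+1,m\}$ rather than equal $L_i$ exactly), whereas your breadth-first search keeps a full basis of $L_i$; both variants are polynomial and support the same dimension-counting argument.
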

\begin{proof}
We use the reduction procedure from~\cite{BS2016AlgebraicSynchronizationCriterion}, which in polynomial time replaces each set $\Sigma^{\le i}$ in the proof of Lemma~\ref{lem:avoid_or_compress} with a set $W_i$ containing at most $i+1$ words such that $L_i$ has the same dimension.

The procedure starts for $i=0$ with $\{\varepsilon\}$ (the set with the empty word) and inductively constructs a set $W_i$ assuming we have found $W_{i-1}$. This is done by considering all words $wa$ for $w \in W_{i-1}$ and $a \in \Sigma$ and setting $W_i = W_{i-1} \cup \{wa\}$ for which the dimension of the corresponding subspace grows. There always exists such a word $wa$, which is argued by ascending chain condition.

Then, the set $W_m$ is used to span the first linear subspace with the maximal dimension ($L_m$), so we can find a word satisfying Case~(1) or Case~(2) of Lemma~\ref{lem:avoid_or_compress} in $W_m$.
It is obvious that the corresponding words from the other proofs are constructible in polynomial time.
\end{proof}

%%%%%%%%%%%%%%%%%%%%%%%%%%%%%%%%%%%%%%%%%%%%%%%%%%%%%%%%%%%%
\section{Improved bound on reset threshold}

In this section, we consider a synchronizing $n$-state automaton $\mathrsfs{A}(Q,\Sigma,\delta)$.
Obviously, in such an automaton, every state is avoidable unless there is a sink state (a state $q$ such that $q\cdot a = q$ for all $a \in \Sigma$), which cannot be avoided.
For synchronizing automata with a sink state the tight upper bound is $n(n-1)/2$ (see, e.g., \cite{Rystsov1997ResetWordsForCummutativeAndSolvableAutomata}).
Thus we can assume that $\mathrsfs{A}$ does not have a sink state, and so Lemma~\ref{lem:avoid_or_compress} and Lemma~\ref{lem:avoiding_iterations} can be applied for every non-empty subset $A$.

\begin{lemma}\label{lem:preimage_states}
Let $w \in \Sigma^*$ and let $g = \min\{|q \cdot w^{-1}| \mid q \in Q\cdot w\}$.
There are at least $(g+1)|Q\cdot w|-n$ states $q \in Q\cdot w$ such that $|q\cdot w^{-1}| = g$.
\end{lemma}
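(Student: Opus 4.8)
The plan is to use a straightforward counting argument based on the fact that the preimages of the states in $Q\cdot w$ partition $Q$. Writing $r = |Q\cdot w|$ for the rank of $w$, I would first observe that since $w$ acts as a function on $Q$, the sets $q\cdot w^{-1}$ for $q \in Q\cdot w$ are pairwise disjoint and cover all of $Q$. Hence their sizes sum to $n$, giving the identity $\sum_{q \in Q\cdot w} |q\cdot w^{-1}| = n$.

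Next I would split the image $Q\cdot w$ according to preimage size. Let $n_g$ denote the number of states $q \in Q\cdot w$ with $|q\cdot w^{-1}| = g$. By the definition of $g$ as the minimum, every one of the remaining $r - n_g$ states has a preimage of size at least $g+1$ (the sizes being integers). Substituting these lower bounds into the identity above gives
$$n = \sum_{q \in Q\cdot w} |q\cdot w^{-1}| \ge g\cdot n_g + (g+1)(r - n_g) = (g+1)r - n_g.$$
Rearranging yields $n_g \ge (g+1)r - n = (g+1)|Q\cdot w| - n$, which is exactly the claimed bound.

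There is no genuine obstacle here: the entire argument is an elementary averaging estimate, and the only point requiring care is the direction of the inequality, namely that the minimality of $g$ forces every non-minimal preimage to contribute at least $g+1$ to the sum. This is precisely what converts the total-size constraint into a lower bound on the number of minimal preimages, and the estimate is tight exactly when every non-minimal preimage has size $g+1$.
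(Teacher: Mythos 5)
Your proof is correct and follows essentially the same argument as the paper: both use the fact that the preimages $q\cdot w^{-1}$ partition $Q$, split the image according to whether the preimage has size exactly $g$ or at least $g+1$, and rearrange the resulting counting inequality to obtain $n_g \ge (g+1)|Q\cdot w| - n$. No meaningful difference in approach or in the details.
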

\begin{proof}
Let $d$ be the number of states $q \in Q\cdot w$ whose preimages under $w^{-1}$ have size equal to $g$.
So $|Q\cdot w|-d$ states have the preimages of size at least $g+1$.
Note that $(Q\cdot w)\cdot w^{-1}=Q$, and that the sets $q\cdot w^{-1}$ and $p\cdot w^{-1}$ are disjoint for all pairs of states $q \neq p$.
So $Q\cdot w^{-1}$ has cardinality at least $dg+(g+1)(|Q\cdot w|-d)=(g+1)|Q\cdot w|-d$.
Since this cannot be larger than $n=|Q|$, we get $d \ge (g+1)|Q\cdot w|-n$.
\end{proof}
From Lemma~\ref{lem:preimage_states}, in particular, we get that there are at least $2|Q\cdot w|-n$ states in the image $Q\cdot w$ with a unique state in the preimage.

The following lemma is based on~\cite[Lemma~4]{Tr2011ModifyingUpperBound}, but with a more general bound:
\begin{lemma}\label{lem:using_avoiding}
Let $w \in \Sigma^*$ be a word of rank $r \ge \lfloor(n+1)/2\rfloor$.
Suppose that for some integer $k \ge 1$, for every $A \subset Q$ of size $1 \le |A| \le n-1$, there is a word $v_A \in \Sigma^{\le k(n-|A|)}$ such that $A \nsubseteq Q \cdot v_A$.
Then there is a word of rank at most $n/2$ and length at most
$$|w| + k\frac{n^2 - (2 n - 2 r - 1)^2}{4}.$$
\end{lemma}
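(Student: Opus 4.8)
The plan is to start from $w$ and repeatedly prepend short avoiding words, each stripping at least one state from the current image, until the rank drops to at most $n/2$. Write $W_0 = w$ and $S_0 = Q\cdot W_0$, so $|S_0| = r$. At stage $i$ we have a word $W_i$ with image $S_i = Q\cdot W_i$ of size $r_i$, and we produce $W_{i+1} = u_i W_i$ for a suitable avoiding word $u_i$ supplied by the hypothesis.

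The engine of the construction is the following observation. If a state $q \in S_i$ has a unique preimage under $W_i$, say $q\cdot W_i^{-1} = \{p\}$, and $u$ is any word with $p \notin Q\cdot u$, then $q \notin Q\cdot u W_i$: any $x \in Q\cdot u$ with $x\cdot W_i = q$ would have to equal $p$, contradicting $p \notin Q\cdot u$. Since moreover $Q\cdot u W_i = (Q\cdot u)\cdot W_i \subseteq Q\cdot W_i = S_i$, prepending $u$ yields $Q\cdot uW_i \subseteq S_i\setminus\{q\}$, so the rank drops by at least one. Hence it suffices, at each stage, to avoid from $Q$ the preimage of some state of $S_i$ that has a unique preimage.

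To control the cost of one stage I would avoid many such preimages at once, so that the hypothesis may be invoked with a large set (where it is cheap). By Lemma~\ref{lem:preimage_states} applied to $W_i$ with $g=1$, at least $2r_i - n$ states of $S_i$ have a unique preimage; let $A_i$ consist of the preimages of any $2r_i - n$ of them, so $|A_i| = 2r_i - n$, which is a legitimate size in $[1,n-1]$ while $(n+1)/2 \le r_i \le n-1$. The hypothesis then provides $u_i \in \Sigma^{\le k(n-|A_i|)}$ with $A_i \nsubseteq Q\cdot u_i$; some preimage in $A_i$ is therefore missing from $Q\cdot u_i$, and by the observation above $r_{i+1} = |Q\cdot u_i W_i| \le r_i - 1$. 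The length added at this stage is at most $k\bigl(n-(2r_i-n)\bigr) = 2k(n-r_i)$.

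Finally I would sum the costs. We iterate while $r_i \ge \lceil (n+1)/2\rceil$ (so that $2r_i - n \ge 1$); when the loop halts the rank is at most $\lfloor n/2\rfloor \le n/2$. As the rank strictly decreases, the total added length is bounded, in the worst case of unit drops, by $2k\sum_{s=\lceil(n+1)/2\rceil}^{r}(n-s)$. Setting $f(r) = \bigl(n^2 - (2n-2r-1)^2\bigr)/4$, one checks the telescoping identity $f(r)-f(r-1) = 2(n-r)$, so the sum equals $k\bigl(f(r) - f(\lceil(n+1)/2\rceil-1)\bigr) \le k\,f(r)$, using that $f$ is non-negative on the relevant range (indeed $f(\lceil(n+1)/2\rceil-1)\ge 0$, vanishing when $n$ is odd). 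Added to the $|w|$ already spent, this gives the claimed bound. The main obstacle I anticipate is exactly this bookkeeping: confirming the telescoping identity and the sign of the endpoint term, and handling the degenerate top end $r=n$, where the set $A_i$ would be all of $Q$ and no avoiding word exists; this case must be excluded either because the context guarantees $r<n$ or by treating the first reduction separately.
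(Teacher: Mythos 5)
Your proof is correct and takes essentially the same route as the paper's: at each current rank $r_i \ge \lceil(n+1)/2\rceil$, prepend an avoiding word for the set of unique preimages of the $2r_i-n$ uniquely-covered image states supplied by Lemma~\ref{lem:preimage_states}, paying at most $2k(n-r_i)$ per rank drop, and telescope the costs (the paper packages the identical sum as an induction with the intermediate bound $|w|+k(r-i)(2n-r-i-1)$). If anything, your write-up is more careful on two points than the paper's: the paper's text defines $X$ as image states with unique preimages and then avoids a state of $X$ itself, whereas the correct object to avoid is the preimage, exactly as you do; and the degenerate case $r=n$, where the hypothesis cannot be invoked for a set of size $n$, is silently passed over in the paper, while you flag it and your fix (a separate first step with a subset of size $n-1$, whose extra cost $k$ is absorbed by the bound) works.
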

\begin{proof}
For $i=r,r-1,\ldots,\lfloor n/2\rfloor$, we inductively construct words $w_i$ of length $\le |w| + k (r - i) (2n - r - i - 1)$ of rank at most $i$.
First, let $w_r = w$.

Let $i < r$ and suppose that we have already found $w_{i+1}$.
If already $|Q\cdot w_{i+1}| \le i$ then we just set $w_i = w_{i+1}$.
Otherwise, we have $|Q\cdot w_{i+1}| = i+1$.

Because $i+1 \ge (n+1)/2$, there exists a non-empty subset of $Q\cdot w_{i+1}$ of states with a unique state in the unique preimage.
By Lemma~\ref{lem:preimage_states}, we let $X \subseteq Q\cdot w_{i+1}$ to be a subset of size $2|Q\cdot w_{i+1}|-n = 2i+2-n$ of states $q \in Q\cdot w_{i+1}$ such that $|q\cdot w_{i+1}^{-1}|=1$.
We set $w_i = v_X w_{i+1}$, where $v_X$ is the avoiding word from the assumption of the lemma for set $X$.
We have $p \notin Q \cdot v_X$ for some $p \in X$.

State $p$ is the only state mapped by the transformation of $w_{i+1}$ to some state $q = p\cdot w_{i+1}$, i.e.\ there is no other state $p'$ such that $p'\cdot w_{i+1} = q$.
Hence we know that $q \notin Q \cdot w_i = Q \cdot v_X w_{i+1}$.
Since $Q \cdot w_i \subseteq Q \cdot w_{i+1}$, $q \notin Q \cdot w_i$ but $q \in Q \cdot w_{i+1}$, we have $Q \cdot w_i \subsetneq Q \cdot w_{i+1}$.
Therefore, we have rank
$$|Q \cdot w_i| \le |Q \cdot w_{i+1}|-1 \le i+1-1 = i,$$
and length
\begin{align*}
|w_i| \le\ & k(n-|A|) + |w_{i+1}| \\
      \le\ & 2k(n-i-1) + k (r - (i+1)) (2n - r - (i+1) - 1) + |w| \\
        =\ & k (r - i) (2n - r - i - 1) + |w|.
\end{align*}

Finally, for $i=\lfloor n/2\rfloor$ we obtain:
\begin{align*}
     & |w| + k (r - \lfloor n/2\rfloor) (2n - r - \lfloor n/2\rfloor - 1) \\
\le\ & |w| + k (r - (n-1)/2) (2n - r - (n-1)/2 - 1) \\
  =\ & |w| + k (n^2 - (2 n - 2 r - 1)^2)/4.
\end{align*}
\end{proof}

Note that Lemma~\ref{lem:avoiding_iterations2} also provides an upper bound on the length of the shortest avoiding words, but it is larger than that the corresponding bound from Theorem~\ref{thm:pin_bound}, and so would not yield an improvement when used as in Lemma~\ref{lem:using_avoiding}.
Therefore, we use there an assumption about the length of the shortest avoiding words.

We observe that it is profitable to use Theorem~\ref{thm:pin_bound} to find the starting word $w$, as long as $C(i+1,i)$ is smaller than $k(n-|A|)$.
An approximate solution is to find the starting word $w$ of rank at most $n-4k$.
The following lemma utilizes this idea.
\begin{lemma}\label{lem:using_compressing_and_avoiding}
Suppose that for some integer $k$, $1 \le k \le n/8$, for every $A \subset Q$ of size $1 \le |A| \le n-1$, there is a word $v_A \in \Sigma^{\le k(n-|A|)}$ such that $A \nsubseteq Q \cdot v_A$.
Then there is a word of rank at most $n/2$ and length at most
$$k\frac{3 n^2 - 64 k^2 + 144 k + 13}{12}.$$
\end{lemma}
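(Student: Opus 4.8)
The plan is to combine Theorem~\ref{thm:pin_bound} with Lemma~\ref{lem:using_avoiding} in a two-phase strategy. In the first phase I would use the classical compression bound to drive the rank of $Q$ down from $n$ to a moderate value $r$, and in the second phase I would switch to the avoiding-word machinery of Lemma~\ref{lem:using_avoiding} to push the rank further down to at most $n/2$. The crossover value $r$ should be chosen as $r = n - 4k$, following the heuristic stated just before the lemma: Theorem~\ref{thm:pin_bound} compresses a subset of size $s$ at a cost of roughly $(n-s+2)(n-s+1)/2 \approx (n-s)^2/2$, while the avoiding approach costs about $k(n-|A|)$ per unit of rank reduction, and with $|A| = 2i+2-n$ the two costs balance near $n - s \approx 4k$.

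First I would apply Theorem~\ref{thm:pin_bound} iteratively to obtain a word $w$ of rank $r = n-4k$, starting from $S = Q$ (size $n$) and compressing through sizes $n, n-1, \ldots, r+1$. The total length is exactly $C(n, r)$, and since the hypothesis $k \le n/8$ guarantees $r = n - 4k \ge n/2 \ge \lfloor(n+1)/2\rfloor$, this word $w$ is a legitimate input to Lemma~\ref{lem:using_avoiding}. Here I would need to confirm that the avoiding hypothesis of Lemma~\ref{lem:using_avoiding} is inherited directly from the hypothesis of the present lemma, which it is, since both quantify over all $A \subset Q$ with $1 \le |A| \le n-1$ and ask for $v_A \in \Sigma^{\le k(n-|A|)}$.

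Next I would invoke Lemma~\ref{lem:using_avoiding} with this $w$ of rank $r = n-4k$ to obtain a word of rank at most $n/2$ and length at most
\begin{equation*}
C(n, n-4k) + k\frac{n^2 - (2n - 2r - 1)^2}{4}.
\end{equation*}
The remaining work is purely computational: substitute $r = n-4k$ so that $2n - 2r - 1 = 8k - 1$, giving the second summand $k(n^2 - (8k-1)^2)/4$, and evaluate the telescoping sum $C(n, n-4k) = \sum_{s=n-4k+1}^{n} (n-s+2)(n-s+1)/2$. Reindexing by $t = n - s$ turns this into $\sum_{t=0}^{4k-1} (t+2)(t+1)/2$, a sum of consecutive triangular-type terms with a closed form cubic in $k$. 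Adding the two pieces and collecting terms should yield exactly $k(3n^2 - 64k^2 + 144k + 13)/12$.

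The main obstacle I anticipate is not conceptual but arithmetic bookkeeping: the closed form for $C(n, n-4k)$ and the careful expansion of $k(n^2 - (8k-1)^2)/4$ must combine so that the $n^2$-coefficients and all the $k$-dependent constant terms match the target formula precisely. A secondary point requiring care is verifying the boundary behavior of the crossover: one should check that $r = n-4k$ is a valid (integer-compatible) rank and that the regime $1 \le k \le n/8$ keeps $r$ safely in the range where Lemma~\ref{lem:using_avoiding} applies, so that neither phase is vacuous nor oversteps its hypotheses.
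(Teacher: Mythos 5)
Your overall strategy is exactly the paper's: use Theorem~\ref{thm:pin_bound} to compress down to roughly rank $n-4k$, then switch to Lemma~\ref{lem:using_avoiding}, and your arithmetic plan (reindexing $C(n,n-4k)=\sum_{t=0}^{4k-1}(t+2)(t+1)/2 = 4k(8k^2+6k+1)/3$ and adding $k(n^2-(8k-1)^2)/4$) does combine to the stated bound. However, there is a genuine gap in the first phase. Theorem~\ref{thm:pin_bound} only yields a word of rank \emph{at most} $n-4k$, not exactly $n-4k$: each compression step can collapse several states at once, and ranks need not decrease by one per step, so no word of rank exactly $n-4k$ need exist. Consequently the word $w$ you feed into Lemma~\ref{lem:using_avoiding} may have rank strictly below $\lfloor(n+1)/2\rfloor$, in which case that lemma's hypothesis fails and your argument has no branch to fall back on. The paper handles this with an explicit case split: if the rank of $w$ is already below $\lfloor(n+1)/2\rfloor$, then $w$ itself has rank at most $n/2$, and its length $C(n,n-4k)=4k(8k^2+6k+1)/3$ is still within the claimed bound precisely because the term you would otherwise add, $k\bigl(n^2-(8k-1)^2\bigr)/4$, is nonnegative for $1 \le k \le n/8$. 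Without this observation the proof is incomplete.

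Two smaller points need repair as well. First, your justification $n-4k \ge n/2 \ge \lfloor(n+1)/2\rfloor$ is false for odd $n$, where $\lfloor(n+1)/2\rfloor = (n+1)/2 > n/2$; the conclusion $n-4k \ge \lfloor(n+1)/2\rfloor$ can be rescued using integrality of $k$ (for odd $n$, $k \le \lfloor n/8\rfloor \le (n-1)/8$), but not by the chain you wrote. Second, even in the good case the actual rank $r'$ of $w$ may satisfy $\lfloor(n+1)/2\rfloor \le r' < n-4k$; Lemma~\ref{lem:using_avoiding} must then be applied with $r = r'$, and you need the (easy, but unstated) monotonicity of the bound $|w| + k\bigl(n^2-(2n-2r-1)^2\bigr)/4$ in $r$ to replace $r'$ by $n-4k$ in the final estimate.
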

\begin{proof}
From Theorem~\ref{thm:pin_bound}, let $w$ be a word of rank at most $n-4k$ and length at most
$$C(n,n-4k) = 4k (8 k^2 + 6 k + 1) / 3.$$

If $w$ has rank $\ge \lfloor(n+1)/2\rfloor$, then we apply Lemma~\ref{lem:using_avoiding} and obtain a word of rank at most $n/2$ and length at most
\begin{align*}
   & \frac{4k (8 k^2 + 6 k + 1)}{3} + \frac{k (n^2 - (2 n - 2 (n-4k) - 1)^2)}{4} \\
=\ & \frac{k (3 n^2 - 64 k^2 + 144 k + 13)}{12}.
\end{align*}
Otherwise, $w$ has rank $< n/2$, and because
$$k(n^2 - (2 n - 2 (n-4k) - 1)^2)/4 = k(n^2 - (8k-1)^2)/4$$
is positive for $1 \le k \le n/8$ (and $n \ge 8$),
the upper bound is also valid.
Thus, $w$ has the desired length.
\end{proof}

We prove a parametrized upper bound on the reset threshold, depending on whether the assumption in Lemma~\ref{lem:using_compressing_and_avoiding} holds.
When the assumption holds, the lemma provides an upper bound using avoiding words; otherwise, we have a quadratic word of a particular rank that yields an improvement.
\begin{lemma}\label{lem:parametrized_bound}
For every integer $1 \le k \le n/8$, there exists a reset word of length at most
\begin{align*}
   & \max\left\{k\frac{3 n^2 - 64 k^2 + 144 k + 13}{12},\; k(n-1)+C(n-k,\lfloor n/2 \rfloor)\right\} \\
+\ & C(\lfloor n/2\rfloor,1).
\end{align*}
\end{lemma}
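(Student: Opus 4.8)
The plan is to split on whether the hypothesis of Lemma~\ref{lem:using_compressing_and_avoiding} holds for the given $k$: namely, whether for every $A \subset Q$ with $1 \le |A| \le n-1$ there is a word $v_A \in \Sigma^{\le k(n-|A|)}$ with $A \nsubseteq Q \cdot v_A$. In both cases I would first produce a word bringing $Q$ to rank at most $\lfloor n/2 \rfloor$, and then finish by compressing a subset of size $\lfloor n/2 \rfloor$ down to a single state, which costs at most $C(\lfloor n/2 \rfloor,1)$ by the compression bound coming from Theorem~\ref{thm:pin_bound}. This common tail explains the additive $C(\lfloor n/2 \rfloor,1)$ term, while the two cases account for the two entries of the maximum.

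If the hypothesis holds, I would apply Lemma~\ref{lem:using_compressing_and_avoiding} verbatim to get a word of rank at most $n/2$, hence (by integrality of the rank) at most $\lfloor n/2 \rfloor$, of length at most $k(3n^2 - 64k^2 + 144k + 13)/12$, which is the first argument of the maximum. Appending a word compressing this subset to a singleton gives a reset word of length at most that first argument plus $C(\lfloor n/2 \rfloor,1)$, as required.

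If the hypothesis fails, there is some $A$ with $1 \le |A| \le n-1$ for which no word of length at most $k(n-|A|)$ avoids $A$ from $Q$. Here I would invoke Lemma~\ref{lem:avoiding_iterations} with $S = Q$ and this $A$: since the automaton is synchronizing with no sink state every state is avoidable, so $A$ contains an avoidable state and the hypothesis of that lemma (existence of some word with $A \nsubseteq Q \cdot w$) is met. The lemma then yields a word $w$ of length at most $k(n-|A|) \le k(n-1)$ satisfying either $A \nsubseteq Q \cdot w$ or $|Q \cdot w| \le n-k$. The first alternative is exactly what the failure of the hypothesis forbids, so the second must hold: $w$ has length at most $k(n-1)$ and brings $Q$ to rank at most $n-k$. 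Compressing from rank $n-k$ down to $\lfloor n/2 \rfloor$ costs at most $C(n-k,\lfloor n/2\rfloor)$ --- this is legitimate because $k \le n/8$ guarantees $n-k \ge \lfloor n/2 \rfloor$ --- giving the second argument of the maximum, $k(n-1) + C(n-k,\lfloor n/2\rfloor)$; the same final $C(\lfloor n/2 \rfloor,1)$ tail then completes the reset word.

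The routine but essential step, and the one I would treat most carefully, is the logical extraction in the failing case: correctly negating the universally quantified hypothesis to isolate a single bad set $A$, and then checking that alternative~(1) of Lemma~\ref{lem:avoiding_iterations} is genuinely excluded so that alternative~(2) is forced. The only other points needing verification are the elementary rank inequalities --- $n-k \ge \lfloor n/2 \rfloor$ under $k \le n/8$, and rank $\le n/2$ forcing rank $\le \lfloor n/2 \rfloor$ --- together with the telescoping identity $C(n-k,\lfloor n/2\rfloor)+C(\lfloor n/2\rfloor,1)=C(n-k,1)$, which confirms that the two-stage compression is accounted for correctly and that the additive tail is shared consistently between the two branches of the maximum.
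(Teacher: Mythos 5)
Your proposal is correct and follows essentially the same route as the paper: split on whether the avoiding hypothesis of Lemma~\ref{lem:using_compressing_and_avoiding} holds (equivalently, on which case of Lemma~\ref{lem:avoiding_iterations} with $S=Q$ occurs), use that lemma in the first branch and the forced Case~(2) word of length $\le k(n-1)$ and rank $\le n-k$ in the second, then finish both branches with Theorem~\ref{thm:pin_bound} compressions accounting for $C(n-k,\lfloor n/2\rfloor)$ and the shared tail $C(\lfloor n/2\rfloor,1)$. Your explicit negation of the universally quantified hypothesis to force alternative~(2) is exactly the (slightly informal) dichotomy in the paper's proof, just spelled out more carefully.
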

\begin{proof}
We use Lemma~\ref{lem:avoiding_iterations} with the given $k$ and subset $S=Q$.

Suppose that Case~(1) from Lemma~\ref{lem:avoiding_iterations} holds for every $A \subset Q$ with $1 \le |A| \le n-1$.
Then by Lemma~\ref{lem:using_compressing_and_avoiding} we obtain a word $w$ of rank $\le n/2$ and length $\le k(3 n^2 - 64 k^2 + 144 k + 13)/12$.

Suppose that Case~(2) from Lemma~\ref{lem:avoiding_iterations} holds for some $A \subset Q$ with $1 \le |A| \le n-1$.
Then we have a word $w$ of rank $\le n-k$ and length $\le k(n-1)$.
By Theorem~\ref{thm:pin_bound}, we construct a word compressing $Q\cdot w$ to a subset of size $\le n/2$.
Then $k(n-1)+C(n-k,\lfloor n/2 \rfloor)$ is an upper bound for the length of the found word of rank $\le n/2$.

Finally, we need to take the maximum from both cases, and add $C(\lfloor n/2\rfloor,1)$ to bound the length of a word compressing a subset of size $\lfloor n/2\rfloor$ to a singleton.
\end{proof}

Now, by finding a suitable $k$, we state the new general upper bound on the reset threshold:
\begin{theorem}\label{thm:new_bound}
$$\rt(\mathrsfs{A}) \le (85059 n^3 + 90024 n^2 + 196504 n - 10648)/511104.$$
\end{theorem}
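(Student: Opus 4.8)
The plan is to optimize the parametrized bound of Lemma~\ref{lem:parametrized_bound} over the free integer $k$. Writing the two quantities inside the maximum as $A(k)=k(3n^2-64k^2+144k+13)/12$ (the avoiding branch) and $D(k)=k(n-1)+C(n-k,\lfloor n/2\rfloor)$ (the compressing branch), the lemma gives $\rt(\mathrsfs{A})\le \max\{A(k),D(k)\}+C(\lfloor n/2\rfloor,1)$ for every integer $1\le k\le n/8$. Hence it suffices to exhibit one admissible value of $k$ for which the right-hand side does not exceed the claimed polynomial, and the whole argument reduces to a one-parameter minimization.

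First I would put the sums $C(j,i)$ into closed form. Since the summand $(n-s+2)(n-s+1)/2$ equals $\binom{n-s+2}{2}$, the hockey-stick identity gives $C(j,i)=\binom{n-i+2}{3}-\binom{n-j+2}{3}$. This collapses the compressing branch cleanly: using $n-\lfloor n/2\rfloor=\lceil n/2\rceil$,
\[ D(k)+C(\lfloor n/2\rfloor,1)=\binom{n+1}{3}+k(n-1)-\binom{k+2}{3}=\tfrac{n^3-n}{6}+k(n-1)-\binom{k+2}{3}, \]
so this branch is just the old Frankl--Pin value corrected by $k(n-1)-\binom{k+2}{3}$, while the avoiding branch is $A(k)+\binom{n+1}{3}-\binom{\lceil n/2\rceil+2}{3}$. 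Both branches are now explicit polynomials in $n$ and $k$, up to the parity hidden in the floor and ceiling terms.

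Next I would optimize. On the feasible range $A(k)$ increases in $k$ whereas the correction $k(n-1)-\binom{k+2}{3}$ eventually decreases, so the maximum is minimized near the point where the branches balance, i.e.\ where $A(k)=D(k)$. Setting $k=\alpha n$ and comparing leading cubic coefficients, this balance reduces to $248\alpha^3-12\alpha+1=0$, whose root in $(0,1/8)$ is $\alpha_0\approx 0.1136$, strikingly close to $5/44$ (indeed $248(5/44)^3-12(5/44)+1\approx 2.8\cdot10^{-4}$). I would therefore commit to $k=\lceil 5n/44\rceil$, which is admissible since $5/44<1/8$. As a sanity check, the target denominator $511104=6\cdot44^3$ is exactly what $\binom{k+2}{3}\sim k^3/6$ produces for $k\sim 5n/44$, confirming this is the intended choice.

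Finally I would carry out the substitution and bound each of the two nearly-balanced branches by the stated cubic $(85059n^3+90024n^2+196504n-10648)/511104$, taking the larger. The step I expect to be the main obstacle is precisely this last bookkeeping: because $5n/44$ is not an integer and the two branches coincide to leading order at the optimum, I must fix the concrete rounding of $k$, control the parity-dependent contributions of $\lfloor n/2\rfloor$ and $\lceil n/2\rceil$, and track every $O(n^2)$ and $O(n)$ term so that the \emph{worse} branch still fits under the claimed coefficients for all $n$. The finitely many small values of $n$ — where $\lceil 5n/44\rceil$ might fail $k\ge 1$ or where the rank hypotheses of the auxiliary lemmas do not apply — would be disposed of separately against the crude cubic bounds.
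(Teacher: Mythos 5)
Your skeleton is exactly the paper's: apply Lemma~\ref{lem:parametrized_bound}, note the two branches move in opposite directions in $k$, balance them, and take $k\approx 5n/44$. Your balance cubic $248\alpha^3-12\alpha+1=0$ is correct and its root $\approx 0.1137$ matches the paper's stated optimum $k\simeq 0.11375462\,n$; your hockey-stick collapse $D(k)+C(\lfloor n/2\rfloor,1)=\tfrac{n^3-n}{6}+k(n-1)-\binom{k+2}{3}$ is also exact and is actually tidier than the paper, which instead worst-cases the parity of $n$ separately in $C(n-k,\lfloor n/2\rfloor)$ (odd worse) and $C(\lfloor n/2\rfloor,1)$ (even worse). (Your monotonicity statement --- avoiding branch increasing, compressing branch eventually decreasing --- is the right one; the paper's prose states it the other way around, though its computation is unaffected.)

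The issues sit precisely in the step you defer, and one of them is a genuine gap as written. The stated constants are not rounding-independent: the paper takes $k=\lfloor 5n/44\rfloor$ and the claimed polynomial is \emph{literally the output} of substituting $5n/44-1<k\le 5n/44$ into the two branches together with the parity worst-cases. You commit instead to $k=\lceil 5n/44\rceil$. This is not a neutral choice: the correction $k(n-1)-\binom{k+2}{3}$ is still \emph{increasing} in $k$ at $k\approx 5n/44$ for all $n$ up to roughly $135$, so rounding up makes the dominant (compressing) branch larger, not smaller, and in all cases your branches evaluate to a polynomial different from the claimed one. You must then prove that this different polynomial is dominated by the stated bound. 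That domination does hold --- the slack built into the stated bound by the paper's substitutions and parity worst-casing is of order $n^2/16$, which absorbs the $O(n)$ loss from rounding up --- but it is an additional verification, not the "substitution" your plan describes, and without it the stated inequality is not established. Separately, your small-$n$ caveat is misidentified: $\lceil 5n/44\rceil\ge 1$ never fails; what fails is the admissibility condition $k\le n/8$ of Lemma~\ref{lem:parametrized_bound}, namely for $9\le n\le 15$ (the paper's floor choice fails only for $n\le 8$, via $k\ge 1$). Those finitely many cases are indeed disposed of as you say, since the stated bound exceeds $(n^3-n)/6$ there, but you need to dispose of the correct set of them.
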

\begin{proof}
We use Lemma~\ref{lem:parametrized_bound} with a suitable $k$ that minimizes the maximum for large enough $n$.

First, we bound $C(n-k,\lfloor n/2 \rfloor)$ in the second argument in the maximum.
If $n$ is even then
\begin{align*}
C(n-k,\lfloor n/2 \rfloor) =\ & C(n-k,n/2) \\
=\ & \sum_{s=n/2+1}^{n-k} \frac{(n-s+2)(n-s+1)}{2} \\
=\ & \frac{n^3 + 6 n^2 + 8 n - 8 k^3 - 24 k^2 - 16 k}{48}.
\end{align*}
If $n$ is odd then
\begin{align*}
C(n-k,\lfloor n/2 \rfloor) =\ & C(n-k,(n-1)/2) \\
=\ & \sum_{s=(n-1)/2+1}^{n-k} \frac{(n-s+2)(n-s+1)}{2} \\
=\ & \frac{n^3 + 9 n^2 + 23 n - 8 k^3 - 24 k^2 - 16 k + 15}{48},
\end{align*}
which is larger than the previous one.

Now we discuss our choice of $k$; any value of $k$ gives a bound but we try to get it minimal.
Assume that $n$ is large enough.
Note that for the largest possible value $k=n/8$ the first function in the maximum from Lemma~\ref{lem:parametrized_bound} yields the coefficient of $n^3$ equal to $1/48$ (the same as by $C(n,\lfloor n/2\rfloor)$), hence does not give an improvement.
For a similar reason, we reject small values $k \in o(n)$.
Within linear values $k$ of $n$, the first function decreases and the second function increases with $k$.
Since they are continuous, it is enough to consider the values of $k$ such that both functions are equal. 
The approximate solution is $k \simeq 0.11375462 n$.
For simplicity of the calculations and the final formula, we use the approximation $k = \lfloor 5/44 n\rfloor$.
Note that any value of $k$ within the valid range will lead to a correct bound, and we use $5/44$ since it is the best approximation by a rational number using integers with at most two digits.

We assume $n \ge 9$; for the smaller values of $n$ the bound is a valid upper bound since it gives larger values than the bound from Theorem~\ref{thm:pin_bound}.

In the following calculations, we use the fact that $5/44n-1 < \lfloor 5/44 n\rfloor$ and $5/44n-1$ is non-negative.
By substitution, for the first function in the maximum we have
\begin{align}
   & k\frac{3 n^2 - 64 k^2 + 144 k + 13}{12} \nonumber\\
<\ & (5/44 n)\frac{3 n^2 - 64 (5/44 n - 1)^2 + 144 (5/44 n) + 13}{12} \nonumber\\
=\ & (5 n (263 n^2 + 3740 n - 6171))/63888,
\end{align}
and for the second function we have
\begin{align}
   & k(n-1) + \frac{n^3 + 9 n^2 + 23 n - 8 k^3 - 24 k^2 - 16 k + 15}{48} \nonumber\\
<\ & (5/44 n)(n-1) + \big(n^3 + 9 n^2 + 23 n - 8 (5/44 n - 1)^3 \nonumber\\
   & - 24 (5/44 n - 1)^2 - 16 (5/44 n - 1) + 15\big)/48 \nonumber\\ 
=\ & (10523 n^3 + 153912 n^2 + 196504 n + 159720)/511104.
\end{align}
Note that (2) is larger than (1) for all $n$.

Now we have to bound $C(\lfloor n/2\rfloor,1)$.
If $n$ is even then
$$C(\lfloor n/2\rfloor,1) = C(n/2,1) = (7 n^3 - 6 n^2 - 16)/48.$$
If $n$ is odd then
$$C(\lfloor n/2\rfloor,1) = C((n-1)/2,1) = (7 n^3 - 9 n^2 - 31 n - 15)/48,$$
which is smaller than the previous one for $n \ge 2$.

Finally, we obtain
\begin{align*}
   & \frac{10523 n^3 + 152262 n^2 + 189244 n + 191664}{511104} + \frac{7 n^3 - 6 n^2 - 16}{48} \\
=\ & \frac{85059 n^3 + 90024 n^2 + 196504 n - 10648}{511104}.
\end{align*}
\end{proof}
The theorem improves the old well known bound $(n^3-n)/6-1$ by the factor $85059/85184$, or by the coefficient $125/511104$ of $n^3$.
This is slightly better than the simpler formula $114 n^3 / 685 + O(n^2)$.

The bound does not necessarily apply for the words obtained by a greedy compression algorithm for synchronization (\cite{AG2016GreedyAlgorithms,Ep1990}), because the words in the proof of Lemma~\ref{lem:using_avoiding} are constructed by appending avoiding words at the beginning.
However, we can show that there exists a polynomial algorithm finding words of lengths within the bound.
\begin{proposition}
A reset word of length within the bound from Theorem~\ref{thm:new_bound} can be computed in polynomial time.
\end{proposition}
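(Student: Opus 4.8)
The plan is to make the entire chain of constructions leading to Theorem~\ref{thm:new_bound} effective, turning each existential step into a polynomial-time procedure and then composing them. Theorem~\ref{thm:polynomial_time} already guarantees that the words produced by Lemma~\ref{lem:avoid_or_compress}, and hence by Lemma~\ref{lem:avoiding_iterations}, are computable in polynomial time. The two remaining ingredients to make effective are the compressing words supplied by Theorem~\ref{thm:pin_bound}, and the case analysis underlying Lemma~\ref{lem:parametrized_bound}, which on the face of it quantifies over all of the exponentially many subsets $A \subset Q$.

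First I would argue that a word realizing the bound of Theorem~\ref{thm:pin_bound} is computable in polynomial time. A word $w$ compresses a subset $S$ with $|S| \ge 2$ if and only if it synchronizes some pair $\{p,q\} \subseteq S$, so a shortest compressing word of $S$ is a shortest word synchronizing some pair contained in $S$. Such a word can be found by a breadth-first search in the automaton of unordered pairs, whose state set has size $O(n^2)$: one computes, for every pair, the length of a shortest word collapsing it, and then reconstructs a shortest word for the best pair in $S$. This runs in polynomial time, and by Theorem~\ref{thm:pin_bound} its length is at most $(n-|S|+2)(n-|S|+1)/2$. Iterating this at most $n$ times yields, in polynomial time, a word compressing a subset to any prescribed smaller size within the corresponding value of $C(\cdot,\cdot)$.

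The crux is to realize the construction underlying Lemma~\ref{lem:using_avoiding} and Lemma~\ref{lem:parametrized_bound} without inspecting all subsets $A$. The key observation is that the proof of Lemma~\ref{lem:using_avoiding} only ever needs avoiding words for the $O(n)$ specific sets $X$ arising one per rank level, each being the set of states of the current image with a unique preimage, which is computable in polynomial time by counting preimages as in Lemma~\ref{lem:preimage_states}. Hence I would proceed greedily: fix $k = \lfloor 5/44\, n\rfloor$, compress $Q$ to rank at most $n-4k$ by the procedure above, and then, descending through ranks $i$ from the current rank down to $\lfloor n/2\rfloor$, form the set $X$ and invoke Lemma~\ref{lem:avoiding_iterations} with $A=X$ and $S=Q$ (its hypothesis holds because, in the absence of a sink state, every state is avoidable). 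Each invocation returns, in polynomial time, a word of length at most $k(n-|X|)$ satisfying Case~(1) or Case~(2). If it is Case~(1), it is the desired avoiding word and we prepend it and continue; the first time some invocation returns Case~(2), we have a word of rank at most $n-k$ and length at most $k(n-1)$, whereupon we abandon the avoiding strategy and instead compress $Q$ to rank at most $\lfloor n/2\rfloor$ by Theorem~\ref{thm:pin_bound}. This lazy resolution exactly realizes the dichotomy of Lemma~\ref{lem:parametrized_bound}: either Case~(1) holds for every $X$ that we actually query and the avoiding strategy succeeds, or Case~(2) is encountered and the compressing strategy is used, and in both cases the resulting word of rank at most $\lfloor n/2\rfloor$ has length within the maximum of the two bounds. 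A final application of Theorem~\ref{thm:pin_bound} compresses to a singleton, adding at most $C(\lfloor n/2\rfloor,1)$, and the total is within the bound of Theorem~\ref{thm:new_bound}.

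The main obstacle is precisely this apparent exponential quantification over $A$ in the hypotheses of Lemma~\ref{lem:using_avoiding} and Lemma~\ref{lem:parametrized_bound}; the resolution is that the construction is self-certifying, in that querying Lemma~\ref{lem:avoiding_iterations} for the $O(n)$ sets $X$ either produces the avoiding words needed or, on its own, certifies the compressing branch. Two boundary cases remain routine: for $n < 9$ the claimed bound already exceeds $(n^3-n)/6$, so a reset word within it is produced by the iterated compression of Theorem~\ref{thm:pin_bound} alone; and if $\mathrsfs{A}$ has a sink state the same iterated compression yields a reset word of length at most $n(n-1)/2$, which is also within the bound. Collecting the per-step polynomial bounds and the $O(n)$ bound on the number of steps gives the overall polynomial running time.
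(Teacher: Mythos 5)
Your proposal is correct and follows essentially the same route as the paper's own proof: fix $k=\lfloor 5n/44\rfloor$, run the construction of Lemma~\ref{lem:using_avoiding} lazily, computing each avoid-or-compress word via Theorem~\ref{thm:polynomial_time} only for the polynomially many sets $X$ actually encountered, switch to the compressing branch the first time Case~(2) occurs, and realize all $C(j,i)$-bounded compressions by the greedy (pair-BFS) algorithm. Your write-up merely makes explicit what the paper leaves implicit (the self-certifying resolution of the quantification over $A$, the pairs-automaton argument behind the greedy step, and the small-$n$ and sink-state boundary cases), so it matches the paper's argument in substance.
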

\begin{proof}
We use $k$ from the proof of Theorem~\ref{thm:new_bound}.
We follow the construction from the proof of Lemma~\ref{lem:using_avoiding}.
By Theorem~\ref{thm:polynomial_time}, we can compute a word from Lemma~\ref{lem:avoid_or_compress} for a subset $A$.
If (1) holds every time, then we use the obtained word from Lemma~\ref{lem:using_avoiding}.
Otherwise, we use the word from Lemma~\ref{lem:avoid_or_compress} for which (2) holds.
Finally, the words of lengths at most $C(j,i)$ are computed using a greedy compression algorithm (\cite{AG2016GreedyAlgorithms}).
\end{proof}

%%%%%%%%%%%%%%%%%%%%%%%%%%%%%%%%%%%%%%%%%%%%%%%%%%%%%%%%%%%%
\section{Further remarks and open problems}

Although the improvement in terms of the cubic coefficient is small, it breaks longstanding persistence of the old bound from \cite{Pin1983OnTwoCombinatorialProblems}, and possibly opens the area for further progress.

Tiny improvements of the bound from Theorem~\ref{thm:new_bound} are possible with more effort yielding better calculations, for example by tuning the value of $k$ in Theorem~\ref{thm:new_bound}, better rounding, using better bounds at the beginning (note that one can find a shorter word than the word of rank $k$ when Case~(2) holds in Lemma~\ref{lem:avoiding_iterations} by combining with Theorem~\ref{thm:pin_bound}).
These however do not add new ideas.

We present a few open problems that could help to understand avoiding words better, and maybe lead to further improvements.

\medskip
\noindent\textbf{$\bullet$ Avoiding a state}:
The first natural possibility for improving the bound is to show a better bound on the length of the shortest avoiding words.
For strongly connected synchronizing automata, currently the best known lower bound is $2n-3$ by Vojt{\v e}ch Vorel\footnote{personal communication, unpublished, 2016} (binary series), whereas $2n-2$ is conjectured to be a tight upper bound based on experiments \cite{KKS2016ExperimentsWithSynchronizingAutomata}.
\begin{openproblem}
Is $2n-2$ the tight upper bound on the length of the shortest avoiding words for a single state?
\end{openproblem}

\medskip
\noindent\textbf{$\bullet$ Avoiding a subset}:
The technique from Lemma~\ref{lem:using_avoiding} can be applied only for compressing $Q$ to a subset of size at most $n/2$, because at this point there can be no states with a unique state in the preimage.
To bypass this obstacle, we can generalize the concept of avoiding to subsets, and say that a word $w$ \emph{avoids} a subset $D \subseteq Q$ if $D \cap (Q\cdot w) = \emptyset$.
Having a good upper bound on the length of the shortest words avoiding $D$, we could continue using avoiding words for subsets smaller than $n/2$, since for a word $s$ there are at least $|D|\cdot|Q\cdot s|-n$ states such that $1 \le |q\cdot s^{-1}| \le |D|$ (see Lemma~\ref{lem:preimage_states}).
\begin{openproblem}
Find a good upper bound (in terms of $|D|$ and $n$) on the length $\ell$ such that in every $n$-state automaton, for every subset $D \subset Q$ there is a word avoiding $D$ of length at most $\ell$, unless $D$ is not avoidable.
\end{openproblem}

In fact, we can prove an upper bound in the spirit of Lemma~\ref{lem:avoid_or_compress}, provided that we have avoiding words for smaller subsets than $D$.
\begin{lemma}\label{lem:avoiding_subset_or_compress}
For $n \ge 2$, let $\mathrsfs{A}(Q,\Sigma,\delta)$ be an $n$-state strongly connected synchronizing automaton.
Consider non-empty subsets $S,D \subseteq Q$ such that $|S| \ge 1$ and $|D| \ge 2$.
Suppose that there is a state $p \in D$ such that for $D'=D \setminus \{p\}$ there exists a word $w_{D'} \in \Sigma^{\ell}$ that avoids $D'$.
Then there exists a word $w \in \Sigma^{n-1+\ell}$ such that either:
\begin{enumerate}
\item $(S\cdot w) \cap D = \emptyset$, or
\item $|S\cdot w| < |S|$.
\end{enumerate}
\end{lemma}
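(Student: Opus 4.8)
The plan is to reduce this subset-avoidance statement to the single-state technique of Lemma~\ref{lem:avoid_or_compress}, applied \emph{through} the given word $w_{D'}$. Fix $P = p\cdot w_{D'}^{-1}$, the preimage of $p$ under the transformation of $w_{D'}$, and restrict attention to words of the form $w = u\,w_{D'}$. Since $w_{D'}$ avoids $D'$, we have $(S\cdot u\,w_{D'})\cap D' = \emptyset$ for every $u$, so $(S\cdot w)\cap D = \emptyset$ holds exactly when $p\notin S\cdot w$, that is, when no state of $S\cdot u$ lands in $P$. Making this quantitative, let $\phi(v)=\sum_{j\in P}v(j)$; then $\phi([S][u])$ counts the states of $S$ mapped into $P$ by $u$, and it equals $([S][u][w_{D'}])(p)$. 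If $\phi([S][u])=0$ we obtain Case~(1) for $w=u\,w_{D'}$, while if $\phi([S][u])\ge 2$ then at least two states of $S$ are merged onto $p$ by $u\,w_{D'}$, giving Case~(2). Hence it suffices to produce a word $u$ of length at most $n-1$ with $\phi([S][u])\neq 1$, after which appending $w_{D'}$ yields $w$ of length at most $n-1+\ell$.

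To find such $u$ I would run the ascending-chain argument of Lemma~\ref{lem:avoid_or_compress} essentially verbatim on the subspaces $L_i=\lspan(\{[S][u]\mid u\in\Sigma^{\le i}\})$, but tracking the single linear functional $\phi$ together with the sum functional $\sigma(v)=\sum_j v(j)$, which is identically $|S|$ on every spanning vector. The core observation is that if $\phi([S][u])=1$ for every spanning vector with $|u|\le n-1$, then passing to linear combinations forces the homogeneous relation $\phi(v)=\sigma(v)/|S|$ on all of $L_{n-1}$; its coefficient vector has entry $1-1/|S|$ on $P$ and $-1/|S|$ off $P$, hence is nonzero unless $|S|=1$ and $P=Q$. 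A nontrivial relation pins $\dim L_{n-1}\le n-1$, which contradicts the growth estimate $\dim(L_{n-1})\ge\min\{m,n\}$ exactly as in Lemma~\ref{lem:avoid_or_compress}, provided that some word breaks the relation. Thus the entire argument hinges on exhibiting one \emph{breaking} word $u'$ with $\phi([S][u'])\neq 1$.

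Producing this breaking word is the step that consumes the hypotheses of the lemma, and I expect it to be the main obstacle. When $|S|\ge 2$, any reset word $r$, which exists because $\mathrsfs{A}$ is synchronizing, sends $S$ to a single state, so $\phi([S][r])\in\{0,|S|\}$, both different from $1$. When $|S|=1$ and $P\neq Q$, strong connectivity supplies a path from the unique state of $S$ to a state outside $P$, again yielding $\phi=0$. The residual degenerate situations are disposed of directly: if $P=\emptyset$ then $w_{D'}$ already avoids all of $D$ and $u=\varepsilon$ suffices; if $|S|=1$ and $P=Q$ then $w_{D'}$ is itself a reset word onto $p$, and, as long as $D\neq Q$, a shortest path of length at most $n-1$ from the state of $S$ to $Q\setminus D$ realizes Case~(1) without invoking $w_{D'}$ at all. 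With a breaking word in hand, the chain argument guarantees a spanning vector $[S][u]$ with $|u|\le n-1$ and $\phi([S][u])\neq 1$, and then $w=u\,w_{D'}$ has length at most $n-1+\ell$ and satisfies Case~(1) or Case~(2).
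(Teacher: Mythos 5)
Your proposal is correct wherever the lemma itself is true, and at its core it takes the same route as the paper: the same preimage set $P = p\cdot (w_{D'})^{-1}$, the same subspace chain $L_i$ spanned by the vectors $[S][u]$, the same forced linear relation when every short word puts exactly one unit of mass on $P$, and the same trichotomy ($\phi = 0$ gives Case~(1) after appending $w_{D'}$, $\phi \ge 2$ gives Case~(2)). The one structural difference is how the contradiction is closed: the paper uses synchronization together with strong connectivity to produce the vectors $[S][uv]$ concentrated on each single state, so the maximal dimension is $n$ and $\dim(L_{n-1}) = n$ outright; you instead exhibit a single relation-breaking word (a reset word when $|S| \ge 2$, a connectivity path when $|S| = 1$ and $P \ne Q$) and invoke the growth estimate exactly as in Lemma~\ref{lem:avoid_or_compress}. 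These are equivalent in substance, and your $\phi$-formulation even streamlines the paper's case split, since $\phi \ge 2$ absorbs the ``$u$ already compresses $S$'' possibility. What your write-up genuinely adds is the degenerate-case analysis that the paper omits: when $|S| = 1$ and $P = Q$, the relation has zero coefficient vector, so no vector of $\mathbb{R}^n$ can break it and the paper's sentence ``there are vectors that broke this equality'' is simply false there. Indeed the lemma as stated fails in the corner $|S| = 1$, $D = Q$: any $w_{D'}$ avoiding $D' = Q\setminus\{p\}$ must then be a reset word onto $p$, so Case~(1) is impossible and Case~(2) is impossible for a singleton $S$. Your proviso $D \ne Q$, handled by routing the single state of $S$ into $Q\setminus D$ by strong connectivity, is exactly the repair needed; the statement should exclude $D = Q$ (or assume $|S| \ge 2$), and your proof is then complete while the paper's, read literally, has a small hole at this point.
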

\begin{proof}
Let $L_i = \lspan(\{[S][w] \mid w \in \Sigma^{\le i}\})$.
We consider the following sequence of linear subspaces:
$$L_0 \subseteq L_1 \subseteq L_2 \subseteq \ldots,$$
and use the ascending chain condition as in the proof of Lemma~\ref{lem:avoid_or_compress}.
Since the automaton is synchronizing, there is a reset word $u$ so $[S][u]=n[q]$ for some state $q$.
Since the automaton is strongly connected, for every state $p$ we have a word $v$ such that $q\cdot v = p$, and so $[S][uv]=n[p]$.
These vectors generate the whole space $\mathbb{R}^n$, and so the maximal dimension of the linear subspaces from the sequence is $n$; in particular, $\dim(L_{n-1})=n$.

Let $P = p\cdot (w_{D'})^{-1}$.
Suppose for a contradiction that for every word $w$ of length $\le n-1$, subset $S$ is not compressed by $w$ and $|(S\cdot w) \cap P|=1$.
Then $[S][w]$ contains exactly one $1$ and $|P|-1$ $0$s at the positions corresponding to the states from $P$.
Therefore, all vectors $v$ generated by the vectors with this property satisfy:
$$(|S|-1)\sum_{i \in P} v(i) = \sum_{i \in Q \setminus P} v(i).$$
This means that the dimension of $L_{n-1}$ is at most $n-1$, since in $\mathbb{R}^n$ there are vectors that broke this equality.
Hence, we have a contradiction.

Hence, there must be a word $w$ that either compresses $S$ or is such that $|(S\cdot w)\cap P| \ne 1$.
In the latter case, if $(S\cdot w)\cap P = \emptyset$ then we obtain $(S\cdot w w_{D'}) \cap D = \emptyset$.
If $(S\cdot w)\cap P \ge 2$ then $w_{D'}$ maps at least two states from $(S\cdot w)\cap P$ to $p$, thus $w w_{D'}$ compresses $S$.
\end{proof}

By an iterative application of the above lemma, we can obtain the upper bound $k(n-1+kn)$ on the length of a word that either avoids two states from the given subset or compresses the subset.
This bound is too large to provide a further improvement (at least within the cubic coefficient) for the upper bound on the length of the shortest reset words.
However, if the shortest words avoiding a single state are indeed of linear length, then we obtain a quadratic upper bound on the length of the shortest words avoiding two states.

\medskip\textbf{$\bullet$ Breaking a partition}:
From Corollary~\ref{cor:avoiding_bound} we see that the upper bound on the length of avoiding words is better when we have more states to avoid in the subset $A$. Also, the construction showing a lower bound $2n-3$ on the worst-case of avoiding contains only one state requiring such long avoiding words, whereas all the other states are avoidable with a word of length at most $n-1$.
Hence, it should be easier to obtain better upper bounds when we have more choices to avoid (larger subset $A$).

In fact, in our application we are never forced to avoid a single particular state (or subset), but rather we may choose from a number of possibilities.
If $w$ is a word of rank $r$, then it defines a partition of $Q$ into disjoint subsets $P_1,\ldots,P_r$ which are preimages of the states in $Q\cdot w$ under the action of $w$.
To obtain a word of rank $<r$, it is enough to find a word avoiding any of these $r$ subsets.
\begin{openproblem}
For a partition of $Q$ into disjoint subsets $P_1,\ldots,P_r$, what is the smallest length $\ell$ such that there exists a word of length at most $\ell$ avoiding at least one these subsets?
\end{openproblem}
An upper bound could be expressed e.g.\ in terms of $r$ or the maximum/minimum size of these subsets.

\medskip
\noindent\textbf{$\bullet$ Compressing a pair with a given state}:
Another related problem is to bound the length of a word compressing a given state with another state.
Given a state $q \in Q$, what is the length of the shortest words such that $q\cdot w = p\cdot w$ for some other state $p \neq q$; that is, $w$ compresses a pair of states containing $q$.
Note that for a given pair $\{p, q\}$, the shortest compressing words can have length up to $n(n-1)/2$ (the number of all pairs), which is the case in the \v{C}ern\'{y} automata \cite{Cerny1964}, but no such construction is known when $q$ must be compressed just with an arbitrary state $p \neq q$.
At the first glance it seems to be unrelated to avoiding words, but in fact, there is a dependency
between the bounds of the shortest compressing words and the lengths of the shortest avoiding words -- one can use
compressing words to construct an avoiding word and vice versa.
The main question here is whether there exists a linear upper bound on the length of the shortest compressing words (in particular, for strongly connected and synchronizing case).
A quadratic upper bound is obvious, and there are examples requiring linear length (e.g.\ the \v{C}ern\'{y} automata).
\begin{openproblem}
Find a good upper bound on the smallest length $\ell$ (in terms of $n$) such that in every synchronizing strongly connected $n$-state automaton, for every state $p$ there exists a state $q \neq p$ such that $\{p,q\}$ is compressible by a word of length at most $\ell$.
\end{openproblem}

%%%%%%%%%%%%%%%%%%%%%%%%%%%%%%%%%%%%%%%%%%%%%%%%%%%%%%%%%%%%
\bigskip
\noindent\textbf{Acknowledgments}.
I thank Mikhail Berlinkov, Costanza Catalano, Vladimir Gusev, Jakub Ko{\'s}mider, and Jakub Kowalski for proofreading and comments.
%%%%%%%%%%%%%%%%%%%%%%%%%%%%%%%%%%%%%%%%%%%%%%%%%%%%%%%%%%%%
\bibliographystyle{plainurl}

\end{document}